\newtheorem{theorem}{Theorem}[section]
\newtheorem{lemma}[theorem]{Lemma}
\newtheorem{corollary}[theorem]{Corollary}
\newtheorem{remark}[theorem]{Remark}
\begin{document}
\begin{spacing}{1.2}
\title{Quantum Codes from Generalized Reed-Solomon Codes and Matrix-Product Codes}

\author{Tao Zhang and Gennian Ge
\thanks{The research of G. Ge was supported by the National Natural Science Foundation of China under Grant No.~61171198 and  Grant No.~11431003, the Importation and Development of High-Caliber Talents Project of Beijing Municipal Institutions, and Zhejiang Provincial Natural Science Foundation of China under Grant No.~LZ13A010001.}
\thanks{T. Zhang is with  the School of Mathematical Sciences, Capital Normal University,
Beijing 100048, China. He is also with the Department of Mathematics, Zhejiang University,
Hangzhou 310027,  China (e-mail: tzh@zju.edu.cn).}
\thanks{G. Ge is with  the School of Mathematical Sciences, Capital Normal University,
Beijing 100048, China. He is also with Beijing Center for Mathematics and Information Interdisciplinary Sciences, Beijing, 100048, China (e-mail: gnge@zju.edu.cn).}
}

\maketitle

\begin{abstract}
One of the central tasks in quantum error-correction is to construct quantum codes that have good parameters. In this paper, we construct three new classes of quantum MDS codes from classical Hermitian self-orthogonal generalized Reed-Solomon codes. We also present some classes of quantum codes from matrix-product codes. It turns out that many of our quantum codes are new in the sense that the parameters of quantum codes cannot be obtained from all previous constructions.
\end{abstract}

\begin{keywords}
Quantum MDS codes, generalized Reed-Solomon codes, quantum codes, matrix-product codes, Hermitian construction.
\end{keywords}
\section{Introduction}
Quantum error-correcting codes have attracted much attention as schemes that protect quantum states from decoherence during quantum computations and quantum communications. After the pioneering works in \cite{S95,S96}, the theory of quantum codes has developed rapidly. In \cite{CRSS97,CRSS98}, Calderbank et. al found a strong connection between a large class of quantum codes which can be seen as an analog of classical group codes, and self-orthogonal codes over $\mathbb{F}_{4}$. This was then generalized to the nonbinary case in \cite{AK01,R99}. Recently, many quantum codes have been constructed by classical linear codes with Euclidean or Hermitian self-orthogonality \cite{AKS07,CLX05,S99}.

Let $q$ be a prime power, a $q$-ary $((n,K,d))$ quantum code is a $K$-dimensional vector subspace of the Hibert space $(\mathbb{C}^{q})^{\bigotimes n}\cong\mathbb{C}^{q^{n}}$ which can detect up to $d-1$ quantum errors. Let $k=\textup{log}_{q}K$, we use $[[n,k,d]]_{q}$ to denote a $q$-ary $((n,K,d))$ quantum code. As in classical coding theory, one of the central tasks in quantum coding theory is to construct quantum codes with good parameters. The following theorem gives a bound on the achievable minimum distance of a quantum code.
\begin{theorem}$(${\rm \cite{KKK06,KL97}} Quantum Singleton Bound$)$
Quantum codes with parameters $[[n,k,d]]_{q}$ satisfy
$$2d\leq n-k+2.$$
\end{theorem}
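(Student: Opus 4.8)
The plan is to reinterpret the minimum distance in terms of erasure correction and then run an entropic (information-theoretic) argument. First I would recall that a quantum code of minimum distance $d$ can recover from the erasure of any $d-1$ of its $n$ coordinates; equivalently, by the Knill--Laflamme conditions, if I purify the maximally mixed state on the code by a reference system $R$ of dimension $q^{k}$ --- obtaining a pure state $|\psi\rangle$ on $R$ together with the $n$ physical qudits $Q_{1},\dots,Q_{n}$ --- then for every set $S$ of coordinates with $|S|\le d-1$ the reference decouples from $S$, i.e.\ $\rho_{RS}=\rho_{R}\otimes\rho_{S}$, so that the quantum mutual information satisfies $I(R:S)=0$. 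Throughout I would measure von Neumann entropy in base-$q$ units, so that each qudit contributes at most $1$ and a maximally mixed $R$ has $S(R)=k$.

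Assuming $n\ge 2(d-1)$, I would partition the coordinates into three blocks $A$, $B$, $C$ with $|A|=|C|=d-1$ and $|B|=n-2(d-1)=n-2d+2$. Applying the decoupling condition to $A$ and to $C$ yields $S(RA)=S(R)+S(A)$ and $S(RC)=S(R)+S(C)$. Since $|\psi\rangle$ is globally pure, complementary subsystems have equal entropy, so $S(RA)=S(BC)$ and $S(RC)=S(AB)$. Combining these, I obtain the two identities $S(BC)=S(R)+S(A)$ and $S(AB)=S(R)+S(C)$.

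The final step is to eliminate $S(A)$ and $S(C)$ using subadditivity. From $S(BC)\le S(B)+S(C)$ together with the first identity I get $S(R)\le S(B)+S(C)-S(A)$, and from $S(AB)\le S(A)+S(B)$ together with the second identity I get $S(R)\le S(A)+S(B)-S(C)$. Adding these two inequalities cancels $S(A)$ and $S(C)$ and gives $S(R)\le S(B)$. Since $S(R)=k$ and $S(B)\le |B|=n-2d+2$, this yields $k\le n-2d+2$, which rearranges to exactly $2d\le n-k+2$.

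I expect the main obstacle to be the first step: carefully justifying the equivalence between having minimum distance $d$ and being able to correct every $(d-1)$-erasure, and recasting the Knill--Laflamme conditions as the decoupling statement $I(R:S)=0$ in the purified picture. This is where the genuinely quantum input enters, and in particular where the factor of $2$ originates --- it comes from exploiting \emph{two} disjoint correctable blocks $A$ and $C$ rather than a single erased block (a single block only recovers the classical-type bound $d\le n-k+1$). The degenerate regime $n<2(d-1)$ must be handled separately: there the analogous partition forces $|B|=0$, whence $S(R)\le S(B)=0$ and $k=0$, so no nontrivial code exists and the stated bound holds vacuously.
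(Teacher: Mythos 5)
The paper does not actually prove this theorem: it is stated as a quoted result and attributed to \cite{KKK06,KL97}, so there is no in-paper proof to compare against. What you have written is, in essence, the argument from that cited literature (it is Knill and Laflamme's original entropic proof, reproduced in standard textbooks): purify the maximally mixed state on the code by a reference $R$, recast distance $d$ as correctability of every $(d-1)$-erasure and hence as the decoupling condition $I(R:S)=0$ for $|S|\le d-1$, split the coordinates into two correctable blocks $A,C$ and a remainder $B$, and play global purity ($S(RA)=S(BC)$, $S(RC)=S(AB)$) against subadditivity to get $2S(R)\le 2S(B)$, i.e. $k\le |B|=n-2d+2$. The steps are sound: under the paper's definition of distance (detection of all errors of weight $\le d-1$), the Knill--Laflamme conditions $\Pi E_a^{\dagger}E_b\Pi=c_{ab}\Pi$ hold for any $E_a,E_b$ supported in a fixed set of $d-1$ coordinates, since $E_a^{\dagger}E_b$ again has weight $\le d-1$; this gives erasure correction, and erasure correction is equivalent to decoupling of $R$ from the erased block. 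Your parenthetical observation about where the factor of $2$ enters (two disjoint correctable blocks versus one) is also accurate.

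The one place you are too quick is the degenerate regime $n<2(d-1)$. Your modified partition (with $B=\emptyset$ and $|A|,|C|\le d-1$ covering all coordinates) does force $S(R)=0$, hence $k=0$; but this does not make the bound ``vacuous,'' because for $k=0$ the inequality $2d\le n+2$ is still a substantive claim --- and under the paper's literal detection-based definition of distance it is in fact false for one-dimensional codes, since $\Pi E\Pi=\langle\psi|E|\psi\rangle\Pi$ holds automatically when $\Pi$ has rank one, so such a ``code'' has unbounded $d$. The standard resolution, and the one adopted in \cite{KKK06}, is that the Singleton bound is asserted only for codes with $K\ge 2$ (equivalently $k>0$); with that reading your case analysis is complete, since your argument shows no code with $k\ge 1$ can have $n<2(d-1)$. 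This is a matter of convention rather than a flaw in the main argument, but the claim of vacuousness as you stated it should be amended.
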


A quantum code achieving this quantum Singleton bound is called a quantum maximum-distance-separable (MDS) code. Just as in the classical linear codes, quantum MDS codes form an important family of quantum codes. Constructing quantum MDS codes has become a central topic for quantum codes in recent years.

As we know, the length of a nontrivial $q$-ary quantum MDS codes cannot exceed $q^{2}+1$ if the classical MDS conjecture holds. The quantum MDS codes of length up to $q+1$ have been constructed for all possible dimensions \cite{GBR04} \cite{GRB04}, and many quantum MDS codes of length between $q+1$ and $q^{2}+1$ have also been obtained (see \cite{BE00,CLZ15,JLLX10,JX14,KZL14,L11,LMPZ96,LXW08} and the references therein). However, almost all known $q$-ary quantum MDS codes have minimum distance less than or equal to $\frac{q}{2}+1$.

In this paper, we construct three classes of quantum MDS codes as follows:
\begin{enumerate}
  \item[(1)] Let $q$ be an odd prime power with the form $2am+1$, then there exists a $q$-ary $[[\frac{q^{2}-1}{a},\frac{q^{2}-1}{a}-2d+2,d]]$-quantum MDS code, where $2\leq d\leq (a+1)m+1$.
  \item[(2)] Let $q$ be an odd prime power with the form $2am-1$, then there exists a $q$-ary $[[\frac{q^{2}-1}{2a}-q+1,\frac{q^{2}-1}{2a}-q-2d+3,d]]$-quantum MDS code, where $2\leq d\leq (a+1)m-2$.
  \item[(3)] Let $q$ be an odd prime power with the form $(2a+1)m-1$, then there exists a $q$-ary $[[\frac{q^{2}-1}{2a+1}-q+1,\frac{q^{2}-1}{2a+1}-q-2d+3,d]]$-quantum MDS code, where $2\leq d\leq (a+1)m-1$.
\end{enumerate}
In addition, we also show the existence of $q$-ary quantum MDS codes with length $q^{2}-1$ and minimum distance $d$ for any $2\leq d\leq q$, where $q$ is an odd prime power. This result extends those given in \cite{GBR04,LXW08}.

Matrix-product codes were introduced in \cite{BN01} as a generalization of several well known constructions of longer codes from old ones, for example, the $(a|a+b)$-construction and the $(a+x|b+x|a+b+x)$-construction. In \cite{GHR14}, the authors construct some new quantum codes from matrix-product codes and the Euclidean construction. Motivated by this work, we will give a new construction of quantum codes by matrix-product codes and the Hermitian construction. Some of them have better parameters than the quantum codes listed in table online \cite{E12}.

This paper is organized as follows. In Section~\ref{pre} we recall the basics about linear codes, quantum codes and matrix-product codes. In Section~\ref{quanMDS}, we give three new classes of quantum MDS codes from generalized Reed-Solomon codes. In Section~\ref{QuanMP}, we present a new construction of quantum codes via matrix-product codes and the Hermitian construction.
\section{Preliminaries}\label{pre}
Throughout this paper, let $\mathbb{F}_{q}$ be the finite field with $q$ elements, where $q$ is a prime power. A linear $[n,k]$ code $C$ over $\mathbb{F}_{q}$ is a $k$-dimensional subspace of $\mathbb{F}_{q}^{n}$. The weight $\textup{wt}(x)$ of a codeword $x\in C$ is the number of nonzero components of $x$. The distance of two codewords $x,y\in C$ is $d(x,y)=\textup{wt}(x-y)$. The minimum distance $d$ of $C$ is the minimum distance between any two distinct codewords of $C$. An $[n,k,d]$ code is an $[n,k]$ code with the minimum distance $d$.

Given two vectors $x=(x_{0},x_{1},\cdots,x_{n-1}),\ y=(y_{0},y_{1},\cdots,y_{n-1})\in\mathbb{F}_{q}^{n}$, there are two inner products we are interested in. One is the Euclidean inner product which is defined as $\langle x,y\rangle_{E}=\sum_{i=0}^{n-1}x_{i}y_{i}.$ When $q=l^{2}$, where $l$ is a prime power, then we can also consider the Hermitian inner product which is defined by $\langle x,y\rangle_{H}=\sum_{i=0}^{n-1}x_{i}y_{i}^{l}.$
The Euclidean dual code of $C$ is defined as
$$C^{\bot E}=\{x\in\mathbb{F}_{q}^{n}|\langle x,y\rangle_{E}=0\textup{ for all }y\in C\}.$$
Similarly the Hermitian dual code of $C$ is defined as
$$C^{\bot H}=\{x\in\mathbb{F}_{q}^{n}|\langle x,y\rangle_{H}=0\textup{ for all }y\in C\}.$$
A linear code $C$ is called Euclidean (Hermitian) self-orthogonal if $C\subseteq C^{\bot E}$ ($C\subseteq C^{\bot H}$, respectively), and $C$ is called Euclidean (Hermitian) dual containing if $C^{\bot E}\subseteq C$ ($C^{\bot H}\subseteq C$, respectively).

For a vector $x=(x_{1},\cdots,x_{n})\in\mathbb{F}_{q^{2}}^{n}$, let $x^{q}=(x_{1}^{q},\cdots,x_{n}^{q})$. For a subset $S$ of $\mathbb{F}_{q^{2}}^{n}$, we define $S^{q}$ to be the set $\{x^{q}|x\in S\}$. Then it is easy to see that for a $q^{2}$-ary linear code $C$, we have $C^{\bot H}=(C^{q})^{\bot E}$. Therefore, $C$ is Hermitian self-orthogonal if and only if $C\subseteq (C^{q})^{\bot E}$, i.e., $C^{q}\subseteq C^{\bot E}$.
\subsection{Quantum Codes}
In this subsection, we recall the basics of quantum codes. Let $q$ be a power of a prime number $p$. A qubit $|v\rangle$ is a nonzero vector in $\mathbb{C}^{q}$ which can be represented as $|v\rangle=\sum_{x\in\mathbb{F}_{q}}c_{x}|x\rangle$, where $\{|x\rangle|x\in\mathbb{F}_{q}\}$ is a basis of $\mathbb{C}^{q}$. For $n\geq1$, the $n$-th tensor product $(\mathbb{C}^{q})^{\bigotimes n}\cong\mathbb{C}^{q^{n}}$ has a basis $\{|a_{1}\cdots a_{n}\rangle=|a_{1}\rangle\bigotimes\cdots\bigotimes|a_{n}\rangle|(a_{1},\cdots,a_{n})\in\mathbb{F}_{q}^{n}\},$ then an $n$-qubit is a nonzero vector in $\mathbb{C}^{q^{n}}$ which can be represented as $|v\rangle=\sum_{a\in\mathbb{F}_{q}^{n}}c_{a}|a\rangle,$ where $c_{a}\in \mathbb{C}.$

Let $\zeta_{p}$ be a primitive $p$-th root of unity. The quantum errors in $q$-ary quantum system are linear operators acting on $\mathbb{C}^{q}$ and can be represented by the set of error bases: $\varepsilon_{n}=\{T^{a}R^{b}|a,b\in\mathbb{F}_{q}\}$, where $T^{a}R^{b}$ is defined by
$$T^{a}R^{b}|x\rangle=\zeta_{p}^{\textup{Tr}_{\mathbb{F}_{q}/\mathbb{F}_{p}}(bx)}|x+a\rangle.$$
The set
$$E_{n}=\{\zeta_{p}^{l}T^{a}R^{b}|0\leq l\leq p-1,a=(a_{1},\cdots,a_{n}),b=(b_{1},\cdots,b_{n})\in\mathbb{F}_{q}^{n}\}$$
forms an error group, where $\zeta_{p}^{l}T^{a}R^{b}$ is defined by
$$\zeta_{p}^{l}T^{a}R^{b}|x\rangle=\zeta_{p}^{l}T^{a_{1}}R^{b_{1}}|x_{1}\rangle\bigotimes\cdots\bigotimes T^{a_{n}}R^{b_{n}}|x_{n}\rangle=\zeta_{p}^{l+\textup{Tr}_{\mathbb{F}_{q}/\mathbb{F}_{p}}(bx)}|x+a\rangle,$$
for any $|x\rangle=|x_{1}\rangle\bigotimes\cdots\bigotimes|x_{n}\rangle$, $x=(x_{1},\cdots,x_{n})\in\mathbb{F}_{q}^{n}$. For an error $e=\zeta_{p}^{l}T^{a}R^{b}$, its quantum weight is defined by
$$w_{Q}(e)=\sharp\{1\leq i\leq n|(a_{i},b_{i})\neq(0,0)\}.$$

A subspace $Q$ of $\mathbb{C}^{q^{n}}$ is called a $q$-ary quantum code with length $n$. The $q$-ary quantum code has minimum distance $d$ if and only if it can detect all errors in $E_{n}$ of quantum weight less than $d$, but cannot detect some errors of weight $d$. A $q$-ary $[[n,k,d]]_{q}$ quantum code is a $q^{k}$-dimensional subspace of $\mathbb{C}^{q^{n}}$ with minimum distance $d$. There are many methods to construct quantum codes, and the following theorem is one of the most frequently used construction methods.
\begin{theorem}$($\rm{\cite{AK01}} Hermitian Construction$)$\label{thm1}
If $C$ is a $q^{2}$-ary Hermitian dual-containing $[n,k,d]$ code, then there exists a $q$-ary $[[n,2k-n,\geq d]]$-quantum code.
\end{theorem}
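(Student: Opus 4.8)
The plan is to realize the desired quantum code as a stabilizer code and to reduce the Hermitian dual-containing hypothesis to the commutativity condition inside the error group $E_n$. First, I would record when two basis errors commute: a direct computation with the defining relation $T^{a}R^{b}|x\rangle=\zeta_p^{\mathrm{Tr}_{\mathbb{F}_q/\mathbb{F}_p}(bx)}|x+a\rangle$ shows that $T^{a}R^{b}$ and $T^{a'}R^{b'}$ (with $a,b,a',b'\in\mathbb{F}_q^{n}$) commute precisely when $\mathrm{Tr}_{\mathbb{F}_q/\mathbb{F}_p}(b\cdot a'-b'\cdot a)=0$, where $\cdot$ is the Euclidean inner product over $\mathbb{F}_q$. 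This antisymmetric trace-symplectic form on $\mathbb{F}_q^{2n}$ is exactly what governs when a subset of $E_n$ generates an abelian group. A $q$-ary stabilizer code is the joint eigenspace of such an abelian subgroup $S\subseteq E_n$; if the image $D\subseteq\mathbb{F}_q^{2n}$ of $S$ is self-orthogonal for the trace-symplectic form, the resulting code has dimension $q^{\,n-\dim_{\mathbb{F}_q}D}$ and minimum distance equal to the smallest symplectic weight of a vector in $D^{\perp_s}\setminus D$.

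The heart of the argument is a dictionary between $q^{2}$-ary Hermitian orthogonality and this trace-symplectic orthogonality. I would fix a basis of $\mathbb{F}_{q^{2}}$ over $\mathbb{F}_q$ and use it to define an $\mathbb{F}_q$-linear isomorphism $\psi:\mathbb{F}_{q^{2}}^{n}\to\mathbb{F}_q^{2n}$ that is a weight isometry, $\mathrm{wt}(c)=\mathrm{swt}(\psi(c))$, and then prove that for every $\mathbb{F}_{q^{2}}$-linear code $C$ one has $\psi(C)^{\perp_s}=\psi(C^{\perp H})$. The key algebraic input is an identity expressing the trace-symplectic pairing $\langle\psi(c),\psi(c')\rangle_s$ in terms of $\mathrm{Tr}_{\mathbb{F}_{q^{2}}/\mathbb{F}_p}$ of (a fixed scalar multiple of) $\langle c,c'\rangle_{H}$. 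The scalar-valued symplectic pairing is a priori weaker than $\mathbb{F}_{q^{2}}$-valued Hermitian orthogonality, but the $\mathbb{F}_{q^{2}}$-linearity of $C$ rescues this: replacing $c$ by $\lambda c$ for all $\lambda\in\mathbb{F}_{q^{2}}$ and letting $\lambda$ vary recovers the full condition $\langle c,c'\rangle_{H}=0$ from the family of $\mathbb{F}_q$-valued symplectic conditions, which yields the claimed equality of duals.

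Granting this dictionary, I set $C_0=C^{\perp H}$. The hypothesis $C^{\perp H}\subseteq C$ says exactly that $C_0$ is Hermitian self-orthogonal, so $D:=\psi(C_0)$ satisfies $D\subseteq D^{\perp_s}=\psi(C)$ and hence lifts to an abelian subgroup $S\subseteq E_n$. Counting dimensions, $\dim_{\mathbb{F}_q}D=2\dim_{\mathbb{F}_{q^{2}}}C_0=2(n-k)$, so the stabilizer construction produces a $q$-ary code of length $n$ and dimension $q^{\,n-2(n-k)}=q^{\,2k-n}$, i.e. with the asserted parameters $[[n,2k-n,\cdot\,]]$. For the distance, the logical operators correspond to $D^{\perp_s}\setminus D=\psi(C)\setminus\psi(C_0)$, so by the isometry the quantum minimum distance equals the least Hamming weight of a codeword in $C\setminus C^{\perp H}$. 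Since $C^{\perp H}\subseteq C$, this set lies in $C\setminus\{0\}$, whose minimum weight is $d$; hence the quantum minimum distance is $\geq d$, completing the proof.

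The step I expect to be the main obstacle is establishing the isometric dictionary $\psi$: one must choose the basis of $\mathbb{F}_{q^{2}}/\mathbb{F}_q$ so that the Hamming weight is preserved (each coordinate $c_i$ contributing a nonzero symplectic pair exactly when $c_i\neq 0$) while \emph{simultaneously} the trace-symplectic form pulls back to a pairing whose radical matches Hermitian orthogonality. Verifying both properties at once, rather than each in isolation, together with the careful use of $\mathbb{F}_{q^{2}}$-linearity to pass from a scalar symplectic condition to the full Hermitian one, is the delicate point; the dimension count and the distance bookkeeping are then routine.
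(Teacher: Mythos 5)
The paper does not prove this statement at all: it is quoted as a known result from \cite{AK01}, so there is no internal proof to compare against. Your argument correctly reconstructs the standard stabilizer-theoretic proof from the cited literature (\cite{AK01}, see also \cite{KKK06}): the trace-symplectic commutativity criterion for $E_n$, the $\mathbb{F}_q$-linear weight isometry $\psi$ satisfying $\psi(C)^{\perp_s}=\psi(C^{\perp H})$ (obtained from a normal basis $\{\beta,\beta^{q}\}$, under which the symplectic pairing becomes $\mathrm{Tr}_{\mathbb{F}_{q^{2}}/\mathbb{F}_{p}}\bigl(\lambda\langle c,c'\rangle_{H}\bigr)$ for the fixed scalar $\lambda=(\beta^{2}-\beta^{2q})^{-1}$, with $\mathbb{F}_{q^{2}}$-linearity of $C$ upgrading the trace conditions to full Hermitian orthogonality), the dimension count $q^{n-2(n-k)}=q^{2k-n}$, and the distance bound via $C\setminus C^{\perp H}\subseteq C\setminus\{0\}$. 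Two minor remarks: the weight isometry $\mathrm{wt}(c)=\mathrm{swt}(\psi(c))$ actually holds for \emph{any} $\mathbb{F}_q$-basis of $\mathbb{F}_{q^{2}}$, so only the form compatibility constrains the basis choice; and in the degenerate case $2k=n$ the set $D^{\perp_s}\setminus D$ is empty, so the distance must there be read off $D^{\perp_s}\setminus\{0\}$, which still gives the bound $\geq d$.
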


Then we have the following corollary.
\begin{corollary}\label{HerMDS}
There is a $q$-ary $[[n,n-2k,k+1]]$ quantum MDS code whenever there exists a $q^{2}$-ary classical Hermitian self-orthogonal $[n,k,n-k+1]$-MDS code.
\end{corollary}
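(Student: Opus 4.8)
The plan is to reduce the statement to an application of the Hermitian Construction (Theorem~\ref{thm1}), which requires a Hermitian dual-containing code, whereas the hypothesis supplies a Hermitian self-orthogonal one; the bridge between the two is duality. Starting from the given $q^{2}$-ary Hermitian self-orthogonal $[n,k,n-k+1]$ MDS code $C$, so that $C\subseteq C^{\bot H}$, I would pass to the Hermitian dual $D=C^{\bot H}$. Using the double-dual identity $(C^{\bot H})^{\bot H}=C$, the inclusion $C\subseteq C^{\bot H}$ becomes $D^{\bot H}=C\subseteq C^{\bot H}=D$, so that $D$ is Hermitian dual-containing. This $D$ is the object to feed into Theorem~\ref{thm1}.

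Next I would pin down the parameters of $D$. Its dimension is $n-k$. To see that $D$ is again MDS, I would use the identity $C^{\bot H}=(C^{q})^{\bot E}$ recorded in Section~\ref{pre}: the Frobenius twist $C^{q}$ has the same dimension and weight distribution as $C$, hence is MDS, and the Euclidean dual of an MDS code is MDS. Therefore $D=C^{\bot H}$ is an $[n,n-k,k+1]$ MDS code.

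Applying Theorem~\ref{thm1} to the $[n,n-k,k+1]$ Hermitian dual-containing code $D$ then yields a $q$-ary $[[n,2(n-k)-n,\geq k+1]]=[[n,n-2k,\geq k+1]]$ quantum code. It remains to upgrade the distance estimate from $\geq k+1$ to $=k+1$. For this I would invoke the Quantum Singleton Bound: with quantum dimension $n-2k$ it gives $2d\leq n-(n-2k)+2=2k+2$, that is, $d\leq k+1$. Combined with the lower bound $d\geq k+1$ from the construction, this forces $d=k+1$ and shows the code meets the Singleton bound, so it is a quantum MDS code with the claimed parameters.

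The only point requiring care---rather than a genuine obstacle---is the direction of the duality: Theorem~\ref{thm1} is phrased for dual-containing codes, so one must dualize the given self-orthogonal code first and then carry that dualization consistently through both the dimension count and the MDS property. Once the parameters of $D$ are correctly identified, the remainder is a direct substitution into Theorem~\ref{thm1} followed by the Singleton bound.
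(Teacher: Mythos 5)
Your proposal is correct and is essentially the argument the paper intends: the corollary is stated as an immediate consequence of Theorem~\ref{thm1}, obtained exactly by passing to the Hermitian dual $C^{\bot H}$ (which is dual-containing and MDS with parameters $[n,n-k,k+1]$), applying the Hermitian construction, and then forcing equality in the distance via the quantum Singleton bound. Your handling of the duality direction and the MDS property of the dual fills in the details the paper leaves implicit, with no gaps.
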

\subsection{Matrix-Product Codes}
In this subsection, we review some notations and results of matrix-product codes. Let $C_{1},C_{2},\cdots,C_{s}$ be a family of $s$ codes of length $m$ over $\mathbb{F}_{q}$ and $A=(a_{ij})$ be an $s\times l$ matrix with entries in $\mathbb{F}_{q}$. Then, the matrix-product code $[C_{1},C_{2},\cdots,C_{s}]\cdot A$ is defined as the code over $\mathbb{F}_{q}$ of length $ml$ with generator matrix
$$\left(
  \begin{array}{cccc}
    a_{11}G_{1} & a_{12}G_{1} & \cdots & a_{1l}G_{1} \\
    a_{21}G_{2} & a_{22}G_{2} & \cdots & a_{2l}G_{2} \\
    \vdots & \vdots & \cdots & \vdots \\
    a_{s1}G_{s} & a_{s2}G_{s} & \cdots & a_{sl}G_{s} \\
  \end{array}
\right),$$
where $G_{i},\ 1\leq i\leq s$, is a generator matrix for the code $C_{i}$. The following theorem gives a characterization of matrix-product codes.
\begin{theorem}\label{MP3}\rm{\cite{HLR09,GHR14}}
The matrix-product code $[C_{1},C_{2},\cdots,C_{s}]\cdot A$ given by a sequence of $[m,k_{i},d_{i}]$ linear codes $C_{i}$ over $\mathbb{F}_{q}$ and a full-rank $s\times l$ matrix $A$ is a linear code whose length is $ml$, it has dimension $\sum_{i=1}^{s}k_{i}$ and minimum distance larger than or equal to $\delta=\textup{min}_{1\leq i\leq s}\{d_{i}\delta_{i}\}$, where $\delta_{i}$ is the minimum distance of the code on $\mathbb{F}_{q}^{l}$ generated by the first $i$ rows of the matrix $A$.
\end{theorem}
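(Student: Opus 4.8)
The plan is to read off linearity and length directly from the displayed generator matrix, to extract the dimension from the full-rank hypothesis on $A$, and to prove the distance bound by re-examining a codeword ``coordinate by coordinate'' across the $l$ blocks. First I would record the explicit shape of a codeword: taking a linear combination of the rows inside the $i$-th row-block produces a vector $u_i\in C_i$ of length $m$, and that block then contributes $a_{ij}u_i$ to the $j$-th column-block. Hence every codeword has the form $c=(c_1\mid c_2\mid\cdots\mid c_l)$ with $c_j=\sum_{i=1}^s a_{ij}u_i$ for some $u_i\in C_i$. Since the generator matrix has $\sum_i k_i$ rows and $ml$ columns, the code is linear of length $ml$, so only the dimension and the distance require real work.

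Second, for the dimension I would show that the map $(u_1,\dots,u_s)\mapsto c$ from $C_1\times\cdots\times C_s$ onto the code is injective, which forces the dimension to equal $\sum_i k_i$ (equivalently, the rows of the generator matrix are independent). Suppose $c=0$, i.e. $\sum_{i=1}^s a_{ij}u_i=0$ for all $j$. Reading the $t$-th coordinate of each block gives $\sum_{i=1}^s a_{ij}u_i^{(t)}=0$ for every $j$, which says the vector $(u_1^{(t)},\dots,u_s^{(t)})$ lies in the left kernel of $A$. Because $A$ has full row rank $s$, this kernel is trivial, so $u_i^{(t)}=0$ for all $i,t$; thus every $u_i=0$ and the map is injective.

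The heart of the argument is the distance bound, and the key device is to regroup the weight of $c$ by coordinate position rather than by block. For $t\in\{1,\dots,m\}$ call $\mathbf{s}_t=(c_1^{(t)},\dots,c_l^{(t)})\in\mathbb{F}_q^l$ the $t$-th slice; then $\mathrm{wt}(c)=\sum_{t=1}^m\mathrm{wt}(\mathbf{s}_t)$, since both sides count the nonzero entries of $c$. Given a nonzero codeword, let $r$ be the largest index with $u_r\neq0$. Each slice satisfies $\mathbf{s}_t=\sum_{i=1}^r u_i^{(t)}A_i$, where $A_i$ is the $i$-th row of $A$, so $\mathbf{s}_t$ is a word of the code $R_r$ generated by the first $r$ rows of $A$. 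Whenever $u_r^{(t)}\neq0$ the coefficient of $A_r$ is nonzero, and since $A_1,\dots,A_r$ are independent the slice $\mathbf{s}_t$ is then a nonzero word of $R_r$, hence $\mathrm{wt}(\mathbf{s}_t)\geq\delta_r$. The number of such positions $t$ is exactly $\mathrm{wt}(u_r)\geq d_r$ because $u_r$ is a nonzero word of $C_r$. Summing over just these positions yields $\mathrm{wt}(c)\geq \mathrm{wt}(u_r)\,\delta_r\geq d_r\delta_r\geq\min_{1\le i\le s}\{d_i\delta_i\}=\delta$.

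I expect the main obstacle to be identifying the correct ``pivot'' index $r$ and recognizing that the slices $\mathbf{s}_t$ are automatically words of the nested code $R_r$; once the weight is reorganized by coordinate, the interplay between $\mathrm{wt}(u_r)\ge d_r$ (controlling how many slices are nonzero) and $\delta_r$ (controlling each such slice's weight) produces the bound cleanly. A minor point to keep in mind is that $R_1\subseteq\cdots\subseteq R_s$ forces $\delta_1\ge\cdots\ge\delta_s$, so the nested structure is consistent, and that the full-rank hypothesis is exactly what makes both the independence of $A_1,\dots,A_r$ and the dimension count valid.
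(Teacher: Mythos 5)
Your proof is correct. Note that the paper does not actually prove this theorem---it is quoted from \cite{HLR09,GHR14}---and your argument (writing each codeword as $c=(c_{1}|\cdots|c_{l})$ with $c_{j}=\sum_{i=1}^{s}a_{ij}u_{i}$, getting the dimension from the triviality of the left kernel of the full-rank matrix $A$, and bounding the weight slice-by-slice via the largest index $r$ with $u_{r}\neq0$, so that at least $\textup{wt}(u_{r})\geq d_{r}$ slices are nonzero words of the code generated by the first $r$ rows of $A$) is essentially the standard proof appearing in those references, so there is nothing to correct or to compare beyond that.
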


In order to construct quantum codes from matrix-product codes, we need the following theorem.
\begin{theorem}\label{MP1}\rm{\cite{BN01}}
Assume that $C_{1},C_{2},\cdots,C_{s}$ are a family of linear codes of length $m$ and $A$ is a nonsingular $s\times s$ matrix, then the following equality of codes happens
$$([C_{1},C_{2},\cdots,C_{s}]\cdot A)^{\bot}=[C_{1}^{\bot},C_{2}^{\bot},\cdots,C_{s}^{\bot}]\cdot (A^{-1})^{t},$$
where $B^{t}$ denotes the transpose of the matrix $B$.
\end{theorem}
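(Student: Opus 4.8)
The plan is to represent codewords of a matrix-product code as $s\times m$ matrices and to translate the Euclidean inner product into a trace; once this is done the duality becomes a three-line matrix computation together with a dimension count.

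First I would fix notation by identifying a vector in $\mathbb{F}_{q}^{ms}$ with an $s\times m$ matrix, cutting the vector into $s$ consecutive blocks of length $m$ which become the rows. Under this identification, reading off the displayed generator matrix shows that a codeword of $[C_{1},\ldots,C_{s}]\cdot A$ is exactly a matrix of the form $A^{t}C$, where $C$ ranges over all $s\times m$ matrices whose $i$-th row lies in $C_{i}$. Likewise, since $((A^{-1})^{t})^{t}=A^{-1}$, a codeword of $[C_{1}^{\bot},\ldots,C_{s}^{\bot}]\cdot (A^{-1})^{t}$ is a matrix $A^{-1}E$, where $E$ ranges over matrices whose $i$-th row lies in $C_{i}^{\bot}$. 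The key observation is that the Euclidean inner product of two block vectors $X,Y$ (viewed as matrices) equals $\mathrm{tr}(XY^{t})$, the sum of the row-by-row inner products.

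Next I would prove the inclusion $[C_{1}^{\bot},\ldots,C_{s}^{\bot}]\cdot (A^{-1})^{t}\subseteq ([C_{1},\ldots,C_{s}]\cdot A)^{\bot}$. Taking $X=A^{t}C$ and $Y=A^{-1}E$ as above and using $(A^{-1})^{t}=(A^{t})^{-1}$, I compute $\mathrm{tr}(XY^{t})=\mathrm{tr}(A^{t}CE^{t}(A^{t})^{-1})$, and the cyclic invariance of the trace collapses this to $\mathrm{tr}(CE^{t})=\sum_{i=1}^{s}\langle c_{i},e_{i}\rangle_{E}$, where $c_{i}$ and $e_{i}$ denote the $i$-th rows of $C$ and $E$. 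Since $c_{i}\in C_{i}$ and $e_{i}\in C_{i}^{\bot}$, every summand vanishes, so $Y$ is orthogonal to each $X$ and the inclusion holds.

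Finally I would upgrade the inclusion to an equality by comparing dimensions. By Theorem~\ref{MP3}, applied to the nonsingular—hence full-rank—matrices $A$ and $(A^{-1})^{t}$, the code $[C_{1},\ldots,C_{s}]\cdot A$ has dimension $\sum_{i=1}^{s}\dim C_{i}$, while $[C_{1}^{\bot},\ldots,C_{s}^{\bot}]\cdot (A^{-1})^{t}$ has dimension $\sum_{i=1}^{s}\dim C_{i}^{\bot}=\sum_{i=1}^{s}(m-\dim C_{i})=ms-\sum_{i=1}^{s}\dim C_{i}$. This last quantity is exactly $\dim(([C_{1},\ldots,C_{s}]\cdot A)^{\bot})$, so the inclusion just established is forced to be an equality. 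The only genuine hazard is the transpose-and-inverse bookkeeping in the trace computation—in particular confirming that the matrix paired against $A^{t}C$ is built from $A^{-1}$ rather than $(A^{-1})^{t}$—and remembering that nonsingularity of $A$ is used both to form $A^{-1}$ and to guarantee the stated dimensions via Theorem~\ref{MP3}.
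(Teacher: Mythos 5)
Your proof is correct, and there is no in-paper argument to compare it with: the paper states Theorem~\ref{MP1} as a known result quoted from \cite{BN01}, without giving a proof, so a self-contained verification like yours is genuinely additional content. Your three steps all check out. The identification of codewords of $[C_{1},\ldots,C_{s}]\cdot A$ with matrices $A^{t}C$ (where the $i$-th row of $C$ lies in $C_{i}$) is exactly what the displayed generator matrix encodes, since the $j$-th length-$m$ block of a codeword is $\sum_{i}a_{ij}c_{i}$, i.e.\ the $j$-th row of $A^{t}C$; similarly a codeword of $[C_{1}^{\bot},\ldots,C_{s}^{\bot}]\cdot(A^{-1})^{t}$ is $A^{-1}E$. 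The inner-product computation
\begin{equation*}
\langle A^{t}C,\,A^{-1}E\rangle_{E}=\mathrm{tr}\bigl(A^{t}CE^{t}(A^{t})^{-1}\bigr)=\mathrm{tr}(CE^{t})=\sum_{i=1}^{s}\langle c_{i},e_{i}\rangle_{E}=0
\end{equation*}
is valid (cyclic invariance of the trace is all that is needed, and you use $(A^{-1})^{t}=(A^{t})^{-1}$ correctly), and the dimension count is legitimate: Theorem~\ref{MP3} applies to both $A$ and $(A^{-1})^{t}$ because each is nonsingular, hence full rank, so the right-hand code has dimension $ms-\sum_{i}\dim C_{i}$, which equals $\dim\bigl(([C_{1},\ldots,C_{s}]\cdot A)^{\bot}\bigr)$ and forces the established inclusion to be an equality. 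This orthogonality-check-plus-dimension-count is the natural and standard route to the statement; your matrix-and-trace formulation simply packages the block bookkeeping more transparently than a coordinate-by-coordinate verification would.
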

\section{New Quantum MDS Codes from Generalized Reed-Solomon Codes}\label{quanMDS}
We first recall the basics of generalized Reed-Solomon codes. Choose $n$ distinct elements $a_{1},\cdots,a_{n}$ of $\mathbb{F}_{q}$ and $n$ nonzero elements $v_{1},\cdots,v_{n}$ of  $\mathbb{F}_{q}$. For $1\leq k\leq n$, we define the code
$$\textup{GRS}_{k}(\mathbf{a},\mathbf{v}):=\{(v_{1}f(a_{1}),\cdots,v_{n}f(a_{n}))|f(x)\in\mathbb{F}_{q}[x]\textup{ and deg}(f(x))<k\},$$
where $\mathbf{a}$ and $\mathbf{v}$ denote the vectors $(a_{1},\cdots,a_{n})$ and $(v_{1},\cdots,v_{n})$, respectively. The code $\textup{GRS}_{k}(\mathbf{a},\mathbf{v})$ is called a generalized Reed-Solomon code over $\mathbb{F}_{q}$. It is well known that a generalized Reed-Solomon code $\textup{GRS}_{k}(\mathbf{a},\mathbf{v})$ is an MDS code with parameters $[n,k,n-k+1]$. The following lemma presents a criterion to determine whether or not a generalized Reed-Solomon code is Hermian self-orthogonal. We assume that $0^{0}:=1$.
\begin{lemma}\label{lemGRS}
Let $\mathbf{a}=(a_{1},\cdots,a_{n})\in\mathbb{F}_{q^{2}}^{n}$ and $\mathbf{v}=(v_{1},\cdots,v_{n})\in(\mathbb{F}_{q^{2}}^{*})^{n}$, then $\textup{GRS}_{k}(\mathbf{a},\mathbf{v})\subseteq\textup{GRS}_{k}(\mathbf{a},\mathbf{v})^{\bot H}$ if and only if $\sum_{i=1}^{n}v_{i}^{q+1}a_{i}^{qj+l}=0$ for all $0\leq j,l\leq k-1$.
\end{lemma}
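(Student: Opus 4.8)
The plan is to characterize Hermitian self-orthogonality via the Hermitian inner product of basis vectors of $\textup{GRS}_k(\mathbf{a},\mathbf{v})$. Recall from the preliminaries that $C \subseteq C^{\bot H}$ means $\langle x, y\rangle_H = 0$ for all $x, y \in C$, where $\langle x, y\rangle_H = \sum_{i=1}^n x_i y_i^q$ (here $l = q$ since the field is $\mathbb{F}_{q^2}$). Since $\textup{GRS}_k(\mathbf{a},\mathbf{v})$ is spanned by the codewords corresponding to the monomials $f(x) = x^j$ for $0 \le j \le k-1$, namely the vectors $\mathbf{c}_j = (v_1 a_1^j, \cdots, v_n a_n^j)$, the containment $C \subseteq C^{\bot H}$ holds if and only if $\langle \mathbf{c}_j, \mathbf{c}_l \rangle_H = 0$ for all pairs $0 \le j, l \le k-1$. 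This reduction to a finite basis check is the conceptual heart of the argument.

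First I would write out $\langle \mathbf{c}_j, \mathbf{c}_l \rangle_H$ explicitly. By definition,
\begin{equation*}
\langle \mathbf{c}_j, \mathbf{c}_l \rangle_H = \sum_{i=1}^{n} (v_i a_i^j)(v_i a_i^l)^q = \sum_{i=1}^{n} v_i^{1+q} a_i^{j+ql}.
\end{equation*}
Thus the condition $\langle \mathbf{c}_j, \mathbf{c}_l \rangle_H = 0$ for all $0 \le j, l \le k-1$ is exactly $\sum_{i=1}^n v_i^{q+1} a_i^{ql+j} = 0$ for all such $j, l$, which (after relabeling the roles of $j$ and $l$, both of which range over $\{0, \cdots, k-1\}$) matches the stated condition $\sum_{i=1}^n v_i^{q+1} a_i^{qj+l} = 0$. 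This establishes the equivalence on the level of basis vectors.

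To complete the forward implication it suffices to note that both $\langle \cdot, \cdot\rangle_H$-orthogonality is additive in its first argument and $q$-semilinear in its second; hence if all pairwise Hermitian products of basis vectors vanish, then so does the Hermitian product of any two codewords, giving $C \subseteq C^{\bot H}$. For the converse, specializing $x = \mathbf{c}_j$ and $y = \mathbf{c}_l$ inside the self-orthogonality hypothesis immediately recovers the stated sum conditions. The convention $0^0 := 1$ is needed to handle the $j = l = 0$ term (and any $a_i = 0$) so that the monomial evaluation $a_i^0 = 1$ is well defined even when some $a_i$ vanishes.

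I do not expect a serious obstacle here; the only point requiring mild care is the bookkeeping of exponents and the semilinearity of the Hermitian form—specifically that raising the second entry to the $q$-th power turns $a_i^l$ into $a_i^{ql}$ while leaving $v_i^{q+1}$ as the symmetric weight factor. Verifying that the span argument is legitimate (that checking a generating set suffices) is routine once one observes that $\langle \alpha x + \beta x', y\rangle_H = \alpha \langle x, y\rangle_H + \beta \langle x', y\rangle_H$ and the corresponding $q$-semilinear expansion in the second slot. The main thing to get right is simply matching the index ranges so that the condition is stated symmetrically over $0 \le j, l \le k-1$.
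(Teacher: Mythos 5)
Your proof is correct and follows essentially the same route as the paper: both reduce the containment to checking the inner product on the monomial evaluation vectors $(v_1a_1^j,\cdots,v_na_n^j)$, yielding the condition $\sum_{i=1}^n v_i^{q+1}a_i^{qj+l}=0$. The only cosmetic difference is that you work with the Hermitian form directly (using its sesquilinearity), whereas the paper first rewrites $C\subseteq C^{\bot H}$ as $C^q\subseteq C^{\bot E}$ and then checks Euclidean orthogonality of the bases of $C^q$ and $C$ --- the underlying computation is identical.
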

\begin{proof}
Note that $\textup{GRS}_{k}(\mathbf{a},\mathbf{v})\subseteq\textup{GRS}_{k}(\mathbf{a},\mathbf{v})^{\bot H}$ if and only if $\textup{GRS}_{k}(\mathbf{a},\mathbf{v})^{q}\subseteq\textup{GRS}_{k}(\mathbf{a},\mathbf{v})^{\bot E}$. It is obvious that $\textup{GRS}_{k}(\mathbf{a},\mathbf{v})^{q}$ has a basis $\{(v_{1}^{q}a_{1}^{iq},\cdots,v_{n}^{q}a_{n}^{iq})|0\leq i\leq k-1\}$, and $\textup{GRS}_{k}(\mathbf{a},\mathbf{v})$ has a basis $\{(v_{1}a_{1}^{i},\cdots,v_{n}^{}a_{n}^{i})|0\leq i\leq k-1\}$. So $\textup{GRS}_{k}(\mathbf{a},\mathbf{v})^{q}\subseteq\textup{GRS}_{k}(\mathbf{a},\mathbf{v})^{\bot E}$ if and only if $\sum_{i=1}^{n}v_{i}^{q+1}a_{i}^{qj+l}=0$ for all $0\leq j,l\leq k-1$.
\end{proof}
Now we consider generalized Reed-Solomon codes over $\mathbb{F}_{q^{2}}$ to construct quantum codes.
\subsection{New Quantum MDS Codes of Length $\frac{q^{2}-1}{a}$}
\begin{theorem}\label{thmNQ2}
Let $q$ be an odd prime with the form $2am+1$, $\omega$ be a fixed primitive element of $\mathbb{F}_{q^{2}}$ and $n=\frac{q^{2}-1}{a}$. Suppose $\mathbf{a}=(\omega^{a},\omega^{2a},\cdots,\omega^{na})\in\mathbb{F}_{q^{2}}^{n}$, $\mathbf{v}=(\omega^{q-1},\omega^{q-1},\omega^{a},\omega^{a},\cdots,\omega^{q-1-a},\omega^{q-1-a},\cdots,\omega^{q-1},\omega^{q-1},\omega^{a},\omega^{a},\cdots,\omega^{q-1-a},\omega^{q-1-a})\in\mathbb{F}_{q^{2}}^{n}$ and $1\leq k\leq (a+1)m$. Then $\textup{GRS}_{k}(\mathbf{a},\mathbf{v})\subseteq\textup{GRS}_{k}(\mathbf{a},\mathbf{v})^{\bot H}$.
\end{theorem}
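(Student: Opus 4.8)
The plan is to apply Lemma~\ref{lemGRS}, which reduces the claim to the single family of identities
$S_{j,l}:=\sum_{i=1}^{n}v_{i}^{q+1}a_{i}^{qj+l}=0$ for all $0\le j,l\le k-1$. First I would rewrite every term as a power of $\omega$. Writing $N:=qj+l$ and $\eta:=\omega^{a}$, the relation $q-1=2am$ gives $a\mid q^{2}-1$, so $\eta$ is a primitive $n$-th root of unity and the entries $a_{i}=\eta^{i}$ run exactly over the $n$-th roots of unity. Moreover each $v_{i}^{q+1}$ is a power of $\xi:=\omega^{(q+1)a}$, an element of order $2m$ (since $(q-1)/a=2m$). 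Thus $S_{j,l}=\sum_{i}\xi^{c_{i}}\eta^{iN}$ for exponents $c_{i}$ read off from $\mathbf{v}$.

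Next I would exploit the fact that $\mathbf{v}$ is built from $\tfrac{q+1}{2}$ identical blocks, each of length $4m$, each consisting of the $2m$ weights $\xi^{0},\xi^{1},\dots,\xi^{2m-1}$ (corresponding to the exponents $q-1,a,2a,\dots,q-1-a$) with every weight occupying a consecutive pair. Decomposing the index as $i=4ms+t$, where $s$ numbers the block and $t$ the position inside it, and setting $\rho:=\eta^{4m}$ (which has order $\tfrac{q+1}{2}$), the sum factors as
\[
S_{j,l}=\Big(\sum_{s=0}^{(q+1)/2-1}\rho^{sN}\Big)(1+\eta^{N})\eta^{N}\Big(\sum_{u=0}^{2m-1}(\xi\eta^{2N})^{u}\Big).
\]
The outer geometric sum over the blocks vanishes unless $N\equiv 0\pmod{\frac{q+1}{2}}$, and in that surviving case the inner sum is a \emph{complete} geometric progression of a $2m$-th root of unity, hence vanishes unless $\xi\eta^{2N}=1$, i.e.\ unless $2N\equiv-(q+1)\pmod n$. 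Consequently $S_{j,l}\neq0$ can occur only when $N=\tfrac{q+1}{2}(2mt-1)$ for some integer $t$.

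The heart of the argument, and the step I expect to be the main obstacle, is then a purely arithmetic statement: for $2\le k\le(a+1)m$ none of the exceptional values $N_{t}=\tfrac{q+1}{2}(2mt-1)$ can be written as $N=qj+l$ with $0\le j,l\le k-1\le(a+1)m-1$. Since $(a+1)m-1<q$, such an $l$ would have to be the base-$q$ residue of $N$, so it suffices to identify that residue. A short computation gives, for $1\le t\le a$, the representation $N_{t}=(tm-1)q+(a+t)m$ with $0\le(a+t)m<q$, so the forced residue is $l=(a+t)m\ge(a+1)m$, already beyond the admissible range; for $t\ge a+1$ one checks that $N_{t}$ exceeds the largest representable value $(q+1)((a+1)m-1)$, and $t\le0$ gives $N_{t}<0$. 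Hence no exceptional $N_{t}$ lies in range, every $S_{j,l}$ vanishes, and the Hermitian self-orthogonality follows. I would emphasize that this counting is sharp: the residue attached to $N_{1}$ equals exactly $(a+1)m$, one more than the maximal admissible $l$, which is precisely why the range of $k$ is capped at $(a+1)m$ and cannot be enlarged.
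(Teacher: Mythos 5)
Your proposal is correct and follows essentially the same route as the paper's proof: the same block decomposition of $\sum_{i}v_{i}^{q+1}a_{i}^{qj+l}$ into a product of geometric sums (an outer one of length $\frac{q+1}{2}$ over the repeated blocks and an inner complete one of length $2m$ over the weights), the same characterization of the exceptional exponents $N=\frac{q+1}{2}(2mt-1)$ (the paper phrases this as $N=t'\frac{q+1}{2}$ with $\frac{q-1}{a}\mid(t'+1)$), and the same base-$q$ representation argument showing no such $N$ equals $qj+l$ with $0\le j,l\le(a+1)m-1$. If anything, your explicit three-way case analysis on $t$ (namely $t\le 0$, $1\le t\le a$, $t\ge a+1$) is a slightly cleaner and more complete version of the paper's corresponding step, which asserts $1\le r\le a$ with a rather terse justification.
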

\begin{proof}
For $0\leq j,l\leq k-1\leq (a+1)m-1$, we have
 \begin{align*}
&\sum_{i=1}^{n}v_{i}^{q+1}a_{i}^{qj+l} \\
&=\sum_{i=1}^{\frac{q-1}{a}}\omega^{ai(q+1)}\sum_{s=0}^{\frac{q-1}{2}}(\omega^{[2s(q-1)+a(2i+1)](qj+l)}+\omega^{[2s(q-1)+a(2i+2)](qj+l)})\\
&=\sum_{i=1}^{\frac{q-1}{a}}\omega^{ai(q+1)}(\omega^{a(2i+1)(qj+l)}+\omega^{a(2i+2)(qj+l)})\sum_{s=0}^{\frac{q-1}{2}}\omega^{2s(q-1)(qj+l)}.
\end{align*}
Note that
\[\sum_{s=0}^{\frac{q-1}{2}}\omega^{2s(q-1)(qj+l)}=\begin{cases}0;&\textup{ if }\frac{q+1}{2}\nmid(qj+l),\\
\frac{q+1}{2};&\textup{ if }\frac{q+1}{2}|(qj+l).\end{cases}\]

Now assume that $qj+l=t\frac{q+1}{2}$. We claim $\frac{q-1}{a}\nmid(t+1)$, otherwise $t=r\frac{q-1}{a}-1$ for $1\leq r\leq a$ since $0\leq j,l\leq q-2$. But $qj+l=(r\frac{q-1}{a}-1)\frac{q+1}{2}=(r\frac{q-1}{2a}-1)q+r\frac{q-1}{2a}+\frac{q-1}{2}$, then $(a+1)m\leq l=r\frac{q-1}{2a}+\frac{q-1}{2}\leq q-1$, which is a contradiction. Thus
 \begin{align*}
&\sum_{i=1}^{n}v_{i}^{q+1}a_{i}^{qj+l} \\
&=\frac{q+1}{2}\sum_{i=1}^{\frac{q-1}{a}}\omega^{ai(q+1)}(\omega^{a(2i+1)(qj+l)}+\omega^{a(2i+2)(qj+l)})\\
&=\frac{q+1}{2}\omega^{at(q+1)}\sum_{i=1}^{\frac{q-1}{a}}\omega^{ai(q+1)(t+1)}+\frac{q+1}{2}\omega^{at\frac{q+1}{2}}\sum_{i=1}^{\frac{q-1}{a}}\omega^{ai(q+1)(t+1)}\\
&=0.
\end{align*}
Then by Lemma~\ref{lemGRS}, $\textup{GRS}_{k}(\mathbf{a},\mathbf{v})\subseteq\textup{GRS}_{k}(\mathbf{a},\mathbf{v})^{\bot H}$.
\end{proof}

\begin{theorem}\label{Mainthm}
Let $q$ be an odd prime power with the form $2am+1$, then there exists a $q$-ary $[[\frac{q^{2}-1}{a},\frac{q^{2}-1}{a}-2d+2,d]]$-quantum MDS code, where $2\leq d\leq (a+1)m+1$.
\end{theorem}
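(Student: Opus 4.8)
The plan is to obtain the quantum code directly from the classical Hermitian self-orthogonal generalized Reed-Solomon code produced in Theorem~\ref{thmNQ2}, feeding it through the Hermitian construction recorded in Corollary~\ref{HerMDS}. The argument is essentially bookkeeping, since the substantive work — the character-sum evaluation establishing self-orthogonality — has already been carried out; what remains is only to package it as MDS data and reparametrize.

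First I would fix $\mathbf{a}$, $\mathbf{v}$ and the length $n=\frac{q^{2}-1}{a}$ exactly as in Theorem~\ref{thmNQ2}. Since the entries $\omega^{a},\omega^{2a},\dots,\omega^{na}$ are $n$ distinct elements of $\mathbb{F}_{q^{2}}$ and $\mathbf{v}\in(\mathbb{F}_{q^{2}}^{*})^{n}$, the code $C:=\textup{GRS}_{k}(\mathbf{a},\mathbf{v})$ is a genuine generalized Reed-Solomon code, hence an MDS code with parameters $[n,k,n-k+1]$ over $\mathbb{F}_{q^{2}}$. For $1\leq k\leq (a+1)m$, Theorem~\ref{thmNQ2} gives $C\subseteq C^{\bot H}$, so $C$ is a $q^{2}$-ary Hermitian self-orthogonal $[n,k,n-k+1]$ MDS code, which is precisely the hypothesis demanded by Corollary~\ref{HerMDS}.

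Next I would apply Corollary~\ref{HerMDS} with this $C$, which immediately yields a $q$-ary $[[n,n-2k,k+1]]$ quantum MDS code for every admissible $k$. It then remains only to reparametrize: putting $d=k+1$ turns the parameters into $[[n,n-2d+2,d]]=[[\frac{q^{2}-1}{a},\frac{q^{2}-1}{a}-2d+2,d]]$, and the range $1\leq k\leq (a+1)m$ becomes $2\leq d\leq (a+1)m+1$, exactly as claimed.

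There is no genuine obstacle in this argument; the only points to check are routine. One is that the dual of an MDS code is again MDS, so the quantum Singleton bound is met with equality and the output is genuinely MDS — but this is already built into Corollary~\ref{HerMDS}. The other point worth flagging is a mild mismatch of hypotheses: Theorem~\ref{thmNQ2} is stated for an odd \emph{prime} $q$, whereas the present theorem asserts the conclusion for every odd prime \emph{power}. Since the evaluation of $\sum_{i=1}^{n}v_{i}^{q+1}a_{i}^{qj+l}$ in Theorem~\ref{thmNQ2} uses only the field structure of $\mathbb{F}_{q^{2}}$ and the multiplicative order of $\omega$, the self-orthogonality persists verbatim for prime powers, which is what licenses the stronger hypothesis here; I would simply invoke the self-orthogonality at this level of generality.
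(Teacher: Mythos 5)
Your proposal is correct and takes essentially the same route as the paper, whose proof is exactly the one-line combination of Theorem~\ref{thmNQ2} with Corollary~\ref{HerMDS} followed by the substitution $d=k+1$. Your observation about the prime versus prime-power hypothesis is also apt: Theorem~\ref{thmNQ2} is stated for odd primes, but its proof uses only the structure of $\mathbb{F}_{q^{2}}$ and the order of $\omega$, so it holds verbatim for odd prime powers, which is what the paper implicitly relies on.
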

\begin{proof}
The proof is a straightforward application of Corollary~\ref{HerMDS} and Theorem~\ref{thmNQ2}.
\end{proof}

As an immediate consequence of Theorem~\ref{Mainthm}, we have the following corollary by taking $a=1$.
\begin{corollary}\label{quanMDS2}
Let $q$ be an odd prime power, then there exists a $q$-ary $[[q^{2}-1,q^{2}-2d+1,d]]$-quantum MDS code, where $2\leq d\leq q$.
\end{corollary}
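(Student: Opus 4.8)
The plan is to obtain Corollary~\ref{quanMDS2} as the special case $a=1$ of Theorem~\ref{Mainthm}, so that almost all of the work is already done; what remains is a short arithmetic verification. First I would observe that, since $q$ is an odd prime power, $q-1$ is even, and hence $q$ can be written in the required form $2am+1$ with $a=1$ and $m=\frac{q-1}{2}$. This legitimizes the application of Theorem~\ref{Mainthm} (and, underneath it, the Hermitian self-orthogonality established in Theorem~\ref{thmNQ2} together with the conversion in Corollary~\ref{HerMDS}) to this choice of $a$ and $m$.

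Next I would substitute $a=1$ into the parameters supplied by Theorem~\ref{Mainthm}. The length $\frac{q^{2}-1}{a}$ collapses to $q^{2}-1$; the dimension $\frac{q^{2}-1}{a}-2d+2$ becomes $(q^{2}-1)-2d+2=q^{2}-2d+1$; and the minimum distance remains $d$, so the resulting object is a $q$-ary $[[q^{2}-1,\,q^{2}-2d+1,\,d]]$ code. For the admissible range I would simplify the upper bound $(a+1)m+1$ at $a=1$: using $m=\frac{q-1}{2}$ gives $(a+1)m+1=2m+1=q$, so the range is precisely $2\le d\le q$, as claimed.

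Finally, to justify the designation ``MDS,'' I would check that these parameters attain the quantum Singleton bound with equality. With $n=q^{2}-1$ and $k=q^{2}-2d+1$ one computes $n-k+2=2d$, so $2d=n-k+2$ and the code is indeed quantum MDS for every $d$ in the stated range. There is no genuine obstacle here: the only point requiring any care is the identity $(a+1)m+1=q$ governing the distance range, which is exactly what singles out $a=1$ as the correct specialization.
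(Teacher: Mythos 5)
Your proposal is correct and follows exactly the paper's route: the paper derives Corollary~\ref{quanMDS2} precisely by setting $a=1$ (so $m=\frac{q-1}{2}$) in Theorem~\ref{Mainthm}, and your arithmetic checks (length $q^2-1$, dimension $q^2-2d+1$, distance range $(a+1)m+1=q$, and saturation of the quantum Singleton bound) simply make explicit what the paper leaves as ``immediate.''
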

\begin{remark}
In \cite{GBR04,LXW08}, the authors showed that there exists a $q$-ary $[[q^{2}-1,q^{2}-2d+1,d]]$-quantum MDS code, where $q$ is an odd prime power and $2\leq d\leq q-1$. Obviously, our result has larger minimum distance.
\end{remark}
\subsection{New Quantum MDS Codes of Length $\frac{q^{2}-1}{2a}-q+1$}
In this subsection, we will construct quantum MDS codes of length $\frac{q^{2}-1}{2a}-q+1$. For our purpose, we need the following lemma.
\begin{lemma}\rm{\cite{JX14}}\label{lemNQ}
Let $A$ be an $(n-1)\times n$ matrix of rank $n-1$ over $\mathbb{F}_{q^{2}}$. Then the equation $Ax=0$ has a nonzero solution in $\mathbb{F}_{q}$ if and only if $A^{(q)}$ and $A$ are row equivalent, where $A^{(q)}$ is obtained from $A$ by raising every entry to its $q$-th power.
\end{lemma}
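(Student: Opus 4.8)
The plan is to analyze everything through the one-dimensional kernel of $A$. Since $A$ has rank $n-1$, its null space over $\mathbb{F}_{q^{2}}$ is a line; fix a nonzero vector $x_{0}$ spanning it. Raising the identity $Ax_{0}=0$ to the $q$-th power entrywise and using that the Frobenius map is a field automorphism gives $A^{(q)}x_{0}^{(q)}=0$, and since $A^{(q)}$ also has rank $n-1$, its null space is the line spanned by $x_{0}^{(q)}$. The first reduction I would make is the standard linear-algebra fact that two $(n-1)\times n$ matrices of rank $n-1$ are row equivalent if and only if they have the same row space, equivalently the same null space (the null space being the orthogonal complement of the row space). Thus $A^{(q)}$ and $A$ are row equivalent if and only if $\ker A=\ker A^{(q)}$, i.e. if and only if $x_{0}^{(q)}=\mu x_{0}$ for some $\mu\in\mathbb{F}_{q^{2}}^{*}$.

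With this reformulation the lemma becomes: $Ax=0$ has a nonzero solution in $\mathbb{F}_{q}^{n}$ if and only if $x_{0}^{(q)}=\mu x_{0}$ for some $\mu$. For the forward direction, suppose $y\in\mathbb{F}_{q}^{n}$ is a nonzero solution. Then $y\in\ker A$, so $y=\lambda x_{0}$ with $\lambda\in\mathbb{F}_{q^{2}}^{*}$, and $y\in\mathbb{F}_{q}^{n}$ means $y^{(q)}=y$. Writing out $(\lambda x_{0})^{(q)}=\lambda x_{0}$ yields $\lambda^{q}x_{0}^{(q)}=\lambda x_{0}$, hence $x_{0}^{(q)}=\lambda^{1-q}x_{0}$, which is exactly the required condition with $\mu=\lambda^{1-q}$.

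For the reverse direction I would start from $x_{0}^{(q)}=\mu x_{0}$ and first check that $\mu^{q+1}=1$: applying Frobenius once more gives $x_{0}=x_{0}^{(q^{2})}=\mu^{q}x_{0}^{(q)}=\mu^{q+1}x_{0}$, and since $x_{0}\neq 0$ we get $\mu^{q+1}=1$. The remaining task is to produce a nonzero $\mathbb{F}_{q}$-solution, i.e. to solve $\lambda^{1-q}=\mu$ for $\lambda\in\mathbb{F}_{q^{2}}^{*}$; then $y=\lambda x_{0}$ satisfies $y^{(q)}=\lambda^{q}x_{0}^{(q)}=\lambda^{q}\mu x_{0}=\lambda x_{0}=y$, so $y\in\mathbb{F}_{q}^{n}$ and $y\neq 0$. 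This is the crux, and it is precisely Hilbert's Theorem~90 for the cyclic extension $\mathbb{F}_{q^{2}}/\mathbb{F}_{q}$: the norm-one elements of $\mathbb{F}_{q^{2}}^{*}$ are exactly those of the form $\lambda^{q-1}$ (equivalently $\lambda^{1-q}$), because the map $\lambda\mapsto\lambda^{q-1}$ on the cyclic group $\mathbb{F}_{q^{2}}^{*}$ of order $(q-1)(q+1)$ has image the subgroup of order $q+1$, which is exactly the kernel of the norm map $N(\lambda)=\lambda^{q+1}$. Since we have just shown $\mu$ lies in this subgroup, a suitable $\lambda$ exists and the construction goes through.

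The main obstacle is the surjectivity step in the reverse direction, namely guaranteeing that the scalar $\mu$ relating $x_{0}^{(q)}$ to $x_{0}$ is actually attained as $\lambda^{1-q}$. Everything else is bookkeeping with the Frobenius map and the one-dimensionality of the kernel; the genuine content is the norm computation $\mu^{q+1}=1\Rightarrow\mu=\lambda^{1-q}$, which one can establish either by invoking Theorem~90 or, more elementarily, by a direct appeal to the cyclicity of $\mathbb{F}_{q^{2}}^{*}$ and counting the image of $\lambda\mapsto\lambda^{q-1}$.
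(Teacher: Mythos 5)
Your proposal is correct, but note that the paper gives no proof of this lemma at all: it is quoted verbatim from \cite{JX14}, so there is nothing internal to compare against. Your argument --- identifying row equivalence of $A$ and $A^{(q)}$ with equality of their one-dimensional kernels, then handling the reverse direction by showing $\mu^{q+1}=1$ and invoking Hilbert's Theorem~90 (the surjectivity of $\lambda\mapsto\lambda^{q-1}$ onto the norm-one subgroup of $\mathbb{F}_{q^{2}}^{*}$) --- is essentially the proof given in that reference, so you have in effect reconstructed the standard argument correctly.
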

\begin{theorem}\label{thmNQ3}
Let $q$ be an odd prime power with the form $2am-1$ and $n=\frac{q^{2}-1}{2a}-q+1$, then there exist $\bf{a}\in\mathbb{F}_{q^{2}}^{n}$ and $\bf{v}\in(\mathbb{F}_{q^{2}}^{*})^{n}$ such that $\textup{GRS}_{k}(\mathbf{a},\mathbf{v})\subseteq\textup{GRS}_{k}(\mathbf{a},\mathbf{v})^{\bot H}$ for $1\leq k\leq (a+1)m-3$.
\end{theorem}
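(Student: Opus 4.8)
The plan is to produce an explicit evaluation set $\mathbf{a}$ together with a multiplier vector $\mathbf{v}$ by reducing the Hermitian self-orthogonality criterion of Lemma~\ref{lemGRS} to one homogeneous linear system over $\mathbb{F}_{q^2}$, and then invoking Lemma~\ref{lemNQ} to solve it over $\mathbb{F}_{q}$. Concretely, write $q+1=2am$ and let $g=\omega^{2a}$ generate the subgroup $G\le\mathbb{F}_{q^2}^{*}$ of order $N:=\frac{q^2-1}{2a}=(q-1)m$. Since $\mathbb{F}_{q}^{*}=\langle\omega^{q+1}\rangle=\langle g^{m}\rangle\subseteq G$, the set $G\setminus\mathbb{F}_{q}^{*}$ is a union of $m-1$ cosets of $\mathbb{F}_{q}^{*}$ and has exactly $n=(q-1)(m-1)=\frac{q^2-1}{2a}-q+1$ elements; I take $\mathbf{a}$ to be this set. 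Because $\textup{GRS}_{k'}(\mathbf{a},\mathbf{v})\subseteq\textup{GRS}_{k}(\mathbf{a},\mathbf{v})$ for $k'\le k$, self-orthogonality of the code for the top value $k=(a+1)m-3$ forces it for every smaller $k$ (take duals in $\textup{GRS}_{k'}\subseteq\textup{GRS}_{k}\subseteq\textup{GRS}_{k}^{\bot H}\subseteq\textup{GRS}_{k'}^{\bot H}$), so it suffices to treat $k=(a+1)m-3$.

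By Lemma~\ref{lemGRS}, self-orthogonality is equivalent to $\sum_{i}u_{i}a_{i}^{qj+l}=0$ for all $0\le j,l\le k-1$, where $u_{i}:=v_{i}^{q+1}\in\mathbb{F}_{q}^{*}$. Let $A$ be the coefficient matrix of this system, with rows indexed by the pairs $(j,l)$ and columns by the points, so its $\big((j,l),i\big)$ entry is $a_{i}^{qj+l}$. The first and cleanest observation is that $A^{(q)}$ is row-equivalent to $A$: raising the $\big((j,l),i\big)$ entry to the $q$-th power gives $a_{i}^{q^{2}j+ql}=a_{i}^{j+ql}$ since $a_{i}^{q^{2}}=a_{i}$, which is precisely the $\big((l,j),i\big)$ entry of $A$; as the index set $\{(j,l)\colon 0\le j,l\le k-1\}$ is symmetric under $(j,l)\mapsto(l,j)$, the matrix $A^{(q)}$ is a row permutation of $A$. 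Thus the hypothesis $A^{(q)}\sim A$ of Lemma~\ref{lemNQ} holds automatically, and the solution space of the system is Frobenius-stable.

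The heart of the argument, and the step I expect to be the main obstacle, is to show that this solution space is exactly one-dimensional, i.e.\ that $A$ has rank $n-1$. Since every $a_{i}$ lies in $G$, the entry $a_{i}^{qj+l}$ depends only on $qj+l \bmod N$, so the distinct conditions are governed by the exponent set $H=\{qj+l \bmod N\colon 0\le j,l\le k-1\}$. I would compute $H$ (or rather its complement in $\mathbb{Z}/N$) using $q=2am-1$ and $k=(a+1)m-3$; the congruence $qj+l\equiv j+l \pmod{q-1}$ controls exactly which exponents are divisible by $q-1$, and the size restriction on $k$ forces the surviving frequencies to be few. Equivalently, extending $u$ by zero on the deleted coset and passing to the discrete Fourier transform, the conditions say $\widehat{u}$ is supported on $H^{c}$, while $u$ must vanish on $\mathbb{F}_{q}^{*}$; a low-degree-polynomial argument on the $q-1$ values $(g^{m})^{-t}$ collapses this to a one-dimensional space. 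This counting, where the precise range of $k$ is used, is the delicate part. Once $\mathrm{rank}(A)=n-1$ is established, passing to a basis of the row space yields an $(n-1)\times n$ matrix of rank $n-1$ that is still Frobenius-stable, so Lemma~\ref{lemNQ} produces a nonzero solution $u=(u_{1},\dots,u_{n})\in\mathbb{F}_{q}^{n}$.

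Finally I must check that every $u_{i}$ is nonzero, so that each is a genuine element of $\mathbb{F}_{q}^{*}$. Here I would use the explicit generator of the one-dimensional kernel: its Fourier transform is supported on two characters whose frequencies differ by $q-1$, so that, writing $a_{i}=g^{e_{i}}$, the generator has the form $u_{i}=c\,g^{-h_{0}e_{i}}\big(1-g^{-(q-1)e_{i}}\big)$. Since $g^{q-1}$ has order $N/(q-1)=m$, this vanishes exactly when $m\mid e_{i}$, that is, precisely on the coset $\mathbb{F}_{q}^{*}$; this is why that coset must be the one deleted, and it shows $u_{i}\neq0$ for all $a_{i}\in G\setminus\mathbb{F}_{q}^{*}$. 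Because the norm map $x\mapsto x^{q+1}$ from $\mathbb{F}_{q^{2}}^{*}$ onto $\mathbb{F}_{q}^{*}$ is surjective, I may choose $v_{i}\in\mathbb{F}_{q^{2}}^{*}$ with $v_{i}^{q+1}=u_{i}$; then $\mathbf{v}=(v_{1},\dots,v_{n})\in(\mathbb{F}_{q^{2}}^{*})^{n}$, and by Lemma~\ref{lemGRS} we obtain $\textup{GRS}_{k}(\mathbf{a},\mathbf{v})\subseteq\textup{GRS}_{k}(\mathbf{a},\mathbf{v})^{\bot H}$, which by the nesting reduction holds for all $1\le k\le(a+1)m-3$.
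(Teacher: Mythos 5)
Your setup coincides with the paper's: the same evaluation set $G\setminus\mathbb{F}_q^{*}$ with $G=\langle\omega^{2a}\rangle$ of order $N=(q-1)m$, the same reduction to the largest $k$, the same translation via Lemma~\ref{lemGRS} into the system $\sum_i u_ia_i^{qj+l}=0$ with $u_i=v_i^{q+1}$ recovered by norm surjectivity, and the same appeal to Lemma~\ref{lemNQ}; your observation that the full $k^2\times n$ coefficient matrix $A$ satisfies $A^{(q)}\sim A$ via the symmetry $(j,l)\mapsto(l,j)$ is correct and pleasant. But your proof has a genuine gap exactly at its mathematical core. You apply Lemma~\ref{lemNQ} to the \emph{full} system, which requires $\mathrm{rank}(A)=n-1$, and you never establish this; you explicitly defer it as ``the delicate part.'' Likewise, the everywhere-nonzero property of the kernel generator rests on an asserted, unproven structural claim (Fourier support on two frequencies differing by $q-1$), which is itself a consequence of the rank analysis you skipped. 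These two claims are the entire content of the theorem: they are precisely where the hypotheses $q=2am-1$ and $k\le(a+1)m-3$ must be used, and they appear in your write-up only inside the deferred step. To see what is actually needed: setting $\hat u(h)=\sum_e u_e g^{he}$ with $g=\omega^{2a}$, a vector supported off the multiples of $m$ solves your system iff $\hat u$ vanishes on $H=\{qj+l\bmod N\}$ \emph{and} the sum of $\hat u$ over each residue class mod $q-1$ vanishes; hence the solution space has dimension $|H^{c}|$ minus the number of residue classes meeting $H^{c}$. Your rank claim is thus equivalent to an exact determination of $H^{c}$ (in small cases it is the four-element set $\{h_{m-2}-1,\,h_{m-2},\,h_{m-1},\,h_{m-1}+1\}$ with $h_t=t(q-1)+m-3$, which indeed gives dimension one and a generator vanishing exactly on $\mathbb{F}_q^{*}$), and proving that covering statement in general is real work, at least as hard as the counting the paper does.

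The paper never computes the rank of the big system; it sidesteps it with an ansatz you do not consider. It restricts attention to $u$ of the twisted form $\omega^{(q+1)(a_c-(m-3)s)}$ at the $c$-th point of the $s$-th coset of $\mathbb{F}_q^{*}$. Summing over each coset then annihilates every frequency with $qj+l\not\equiv m-3\pmod{q-1}$, and the counting argument (showing $t\not\equiv m-2,\,m-1\pmod m$ when $qj+l=t(q-1)+m-3$, which is where $k\le(a+1)m-3$ enters) shows the surviving exponents take only the $m-2$ values $t(q-1)+m-3$, $0\le t\le m-3$, modulo $N$. The problem thereby collapses to an $(m-2)\times(m-1)$ system in the unknowns $\omega^{(q+1)a_c}$, whose full rank and whose kernel's everywhere-nonzero coordinates are immediate from its Vandermonde structure, and Lemma~\ref{lemNQ} is applied to that small matrix. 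So your route is genuinely different, and could in principle be completed, but as written it replaces the hard step by a plan; to finish it you would have to carry out the determination of $H^{c}$, whereas the paper's ansatz gets the same conclusion with an elementary Vandermonde argument.
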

\begin{proof}
Let $\omega$ be a fixed primitive element of $\mathbb{F}_{q^{2}}$. We also let $A$ be an $(m-2)\times(m-1)$ matrix with $A_{ij}=\omega^{2j(m-3+(q-1)(i-1))}$ for $1\leq i\leq m-2,\ 1\leq j\leq m-1$.

Since $(m-3+(q-1)(i-1))q\equiv(m-3+(q-1)(m-i-2))\pmod{q^{2}-1}$ for $1\leq i\leq m-2$, then $A^{(q)}$ and $A$ are row equivalent. By Lemma~\ref{lemNQ}, there exists $\mathbf{c}\in\mathbb{F}_{q}^{m-1}$ such that $A\cdot \mathbf{c}^{t}=0$. Note that by deleting any one column of matrix $A$, the remaining matrix is a Vandermonde matrix, hence all coordinates of $\mathbf{c}$ are nonzero. So we can represent $\mathbf{c}$ as $\mathbf{c}=(\omega^{a_{1}(q+1)},\cdots,\omega^{a_{m-1}(q+1)})$.

Now let $\mathbf{a}=(\omega^{2a},\omega^{4a},\cdots,\omega^{q+1-2a},\omega^{q+1+2a},\cdots,\omega^{2q+2-2a},\cdots,\omega^{q^{2}-q-2+2a},\cdots,\omega^{q^{2}-1-2a})\in\mathbb{F}_{q^{2}}^{n}$ and $\mathbf{v}=(\omega^{a_{1}},\cdots,\\ \omega^{a_{m-1}},\omega^{a_{1}-(m-3)},\cdots,\omega^{a_{m-1}-(m-3)},\cdots,\omega^{a_{1}-(m-3)(q-2)},\cdots,\omega^{a_{m-1}-(m-3)(q-2)})\in(\mathbb{F}_{q^{2}}^{*})^{n}$.
Then for $0\leq j,l\leq k-1\leq (a+1)m-4$, we have
 \begin{align*}
&\sum_{i=1}^{n}v_{i}^{q+1}a_{i}^{qj+l} \\
&=\sum_{i=1}^{m-1}\omega^{a_{i}(q+1)+2ai(qj+l)}\sum_{s=0}^{q-2}\omega^{(q+1)(qj+l-m+3)s}.
\end{align*}
Note that
\[\sum_{s=0}^{q-2}\omega^{(q+1)(qj+l-m+3)s}=\begin{cases}0;&\textup{ if }(q-1)\nmid(qj+l-m+3),\\
q-1;&\textup{ if }(q-1)|(qj+l-m+3).\end{cases}\]

Assume $qj+l-m+3=t(q-1)$, we claim that $t\not\equiv m-2,m-1\pmod{m}$. Otherwise, if $t\equiv m-2\pmod{m}$, let $t=rm+m-2$, then $0\leq r\leq a$. If $r\leq a-1$, then $qj+l=t(q-1)+m-3=(mr+m-3)q+(q-mr-1)=(mr+m-3)q+(2a-r)m-2$ and $(2a-r)m-2>(a+1)m-4$ which is a contradiction. If $r=a$, then $qj+l=t(q-1)+m-3=(am+m-3)q+(am-2)$ and $am+m-3>(a+1)m-4$, which is also a contradiction. Similarly, $t\not\equiv m-1\pmod{m}$. Hence $qj+l\pmod{\frac{q^{2}-1}{2a}}\in\{t(q-1)+m-3|0\leq t\leq m-3\}$. Thus
 \begin{align*}
&\sum_{i=1}^{n}v_{i}^{q+1}a_{i}^{qj+l} \\
&=(q-1)\sum_{i=1}^{m-1}\omega^{a_{i}(q+1)+2ai(qj+l)}\\
&=0,
\end{align*}
where the last equation is from the definition of $\mathbf{c}$. Then by Lemma~\ref{lemGRS}, $\textup{GRS}_{k}(\mathbf{a},\mathbf{v})\subseteq\textup{GRS}_{k}(\mathbf{a},\mathbf{v})^{\bot H}$.
\end{proof}

\begin{theorem}
Let $q$ be an odd prime power with the form $2am-1$, then there exists a $q$-ary $[[\frac{q^{2}-1}{2a}-q+1,\frac{q^{2}-1}{2a}-q-2d+3,d]]$-quantum MDS code, where $2\leq d\leq (a+1)m-2$.
\end{theorem}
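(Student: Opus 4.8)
The plan is to combine Corollary~\ref{HerMDS} with the family of Hermitian self-orthogonal generalized Reed-Solomon codes produced in Theorem~\ref{thmNQ3}, in exactly the way that Theorem~\ref{Mainthm} was deduced from Theorem~\ref{thmNQ2}. Essentially all of the substantive work has therefore already been done in the preceding theorem, and what remains is a bookkeeping argument matching the parameters.

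First I would fix $q=2am-1$ and $n=\frac{q^{2}-1}{2a}-q+1$, and recall that Theorem~\ref{thmNQ3} supplies, for every $k$ in the range $1\leq k\leq (a+1)m-3$, vectors $\mathbf{a}\in\mathbb{F}_{q^{2}}^{n}$ and $\mathbf{v}\in(\mathbb{F}_{q^{2}}^{*})^{n}$ for which $\textup{GRS}_{k}(\mathbf{a},\mathbf{v})\subseteq\textup{GRS}_{k}(\mathbf{a},\mathbf{v})^{\bot H}$. Since every generalized Reed-Solomon code is MDS, $\textup{GRS}_{k}(\mathbf{a},\mathbf{v})$ is a $q^{2}$-ary Hermitian self-orthogonal $[n,k,n-k+1]$ MDS code, which is precisely the hypothesis required to feed into Corollary~\ref{HerMDS}.

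Next I would invoke Corollary~\ref{HerMDS} to obtain, for each such $k$, a $q$-ary $[[n,n-2k,k+1]]$ quantum MDS code. Setting $d=k+1$, the admissible range $1\leq k\leq (a+1)m-3$ translates into $2\leq d\leq (a+1)m-2$, and a direct substitution of $n=\frac{q^{2}-1}{2a}-q+1$ gives $n-2k=n-2(d-1)=\frac{q^{2}-1}{2a}-q-2d+3$, which is exactly the claimed dimension. This yields the asserted $[[\frac{q^{2}-1}{2a}-q+1,\frac{q^{2}-1}{2a}-q-2d+3,d]]$ quantum MDS codes.

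I do not expect a genuine obstacle here, since the hard step — producing the vectors $\mathbf{a},\mathbf{v}$ via Lemma~\ref{lemNQ} and verifying the character-sum condition of Lemma~\ref{lemGRS} — was already completed inside Theorem~\ref{thmNQ3}. The only point requiring care is the index arithmetic: one must confirm that the upper bound $k\leq (a+1)m-3$ of Theorem~\ref{thmNQ3} matches the distance bound $d\leq (a+1)m-2$ after the shift $d=k+1$, and that the dimension formula is transcribed correctly after substituting the value of $n$. Both checks are immediate.
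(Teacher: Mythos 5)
Your proposal is correct and matches the paper's proof exactly: the paper also deduces this theorem as a direct application of Corollary~\ref{HerMDS} to the Hermitian self-orthogonal GRS codes of Theorem~\ref{thmNQ3}, with the same substitution $d=k+1$ turning the range $1\leq k\leq (a+1)m-3$ into $2\leq d\leq (a+1)m-2$ and the dimension $n-2k$ into $\frac{q^{2}-1}{2a}-q-2d+3$. Your parameter bookkeeping is accurate, so nothing further is needed.
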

\begin{proof}
The proof is a straightforward application of Corollary~\ref{HerMDS} and Theorem~\ref{thmNQ3}.
\end{proof}

In particular, taking $a=1$, we obtain the following corollary.
\begin{corollary}
Let $q\geq5$ be an odd prime power, then there exists a $q$-ary $[[\frac{q^{2}-1}{2}-q+1,\frac{q^{2}-1}{2}-q-2d+3,d]]$-quantum MDS code, where $2\leq d\leq q-1$.
\end{corollary}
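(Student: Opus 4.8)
The plan is to combine Corollary~\ref{HerMDS} with Theorem~\ref{thmNQ3} in exactly the way the length-$\frac{q^{2}-1}{a}$ family was handled in Theorem~\ref{Mainthm}. First I would fix a target minimum distance $d$ in the admissible range $2\leq d\leq (a+1)m-2$ and set $k=d-1$. This choice places $k$ in the interval $1\leq k\leq (a+1)m-3$, which is precisely the range for which Theorem~\ref{thmNQ3} guarantees the existence of vectors $\mathbf{a}\in\mathbb{F}_{q^{2}}^{n}$ and $\mathbf{v}\in(\mathbb{F}_{q^{2}}^{*})^{n}$ making $\textup{GRS}_{k}(\mathbf{a},\mathbf{v})$ Hermitian self-orthogonal, where $n=\frac{q^{2}-1}{2a}-q+1$.

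Next I would invoke the standard fact that any generalized Reed-Solomon code $\textup{GRS}_{k}(\mathbf{a},\mathbf{v})$ is MDS with parameters $[n,k,n-k+1]$. Thus the code supplied by Theorem~\ref{thmNQ3} is a $q^{2}$-ary Hermitian self-orthogonal $[n,k,n-k+1]$-MDS code, which is exactly the hypothesis of Corollary~\ref{HerMDS}. Applying that corollary then yields a $q$-ary $[[n,n-2k,k+1]]$ quantum MDS code. Substituting $k=d-1$ and $n=\frac{q^{2}-1}{2a}-q+1$ gives dimension $n-2k=\frac{q^{2}-1}{2a}-q-2d+3$ and minimum distance $k+1=d$, producing the claimed $[[\frac{q^{2}-1}{2a}-q+1,\frac{q^{2}-1}{2a}-q-2d+3,d]]$ code. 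A final sanity check confirms that the quantum Singleton bound is met with equality: $n-(n-2k)+2=2k+2=2d$, so the code is genuinely MDS.

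Since all the substantive work---constructing the Hermitian self-orthogonal GRS codes and verifying the orthogonality condition of Lemma~\ref{lemGRS} via the character-sum computation---has already been carried out inside Theorem~\ref{thmNQ3}, there is no real obstacle remaining at this stage. The only point requiring a moment of care is the bookkeeping that matches the admissible range $2\leq d\leq (a+1)m-2$ for $d$ to the range $1\leq k\leq (a+1)m-3$ for $k$ demanded by Theorem~\ref{thmNQ3}; once this index shift $k=d-1$ is in place, the three parameters of the quantum code fall out by direct substitution, and the theorem follows immediately.
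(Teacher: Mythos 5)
Your proposal is correct and is essentially the paper's own argument: the paper obtains this corollary by applying Corollary~\ref{HerMDS} to the Hermitian self-orthogonal GRS codes of Theorem~\ref{thmNQ3} (exactly your $k=d-1$ bookkeeping, which is the preceding theorem for general $a$) and then taking $a=1$. The only step you leave implicit is that final specialization: since $q$ is odd one writes $q=2m-1$ (i.e., $a=1$, $m=\frac{q+1}{2}$), whereupon your range $2\leq d\leq(a+1)m-2$ becomes exactly $2\leq d\leq q-1$ and the length $\frac{q^{2}-1}{2a}-q+1$ becomes $\frac{q^{2}-1}{2}-q+1$, matching the statement.
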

\subsection{New Quantum MDS Codes of Length $\frac{q^{2}-1}{2a+1}-q+1$}
In this subsection, we consider quantum MDS codes of length $\frac{q^{2}-1}{2a+1}-q+1$.
\begin{theorem}\label{thmNQ4}
Let $q$ be an odd prime power with the form $(2a+1)m-1$ and $n=\frac{q^{2}-1}{2a+1}-q+1$, then there exist $\bf{a}\in\mathbb{F}_{q^{2}}^{n}$ and $\bf{v}\in(\mathbb{F}_{q^{2}}^{*})^{n}$ such that $\textup{GRS}_{k}(\mathbf{a},\mathbf{v})\subseteq\textup{GRS}_{k}(\mathbf{a},\mathbf{v})^{\bot H}$ for $1\leq k\leq (a+1)m-2$.
\end{theorem}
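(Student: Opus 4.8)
The plan is to run the argument of Theorem~\ref{thmNQ3} almost verbatim, with the divisor $2a$ replaced by $2a+1$. Since $q+1=(2a+1)m$ we have $\frac{q^{2}-1}{2a+1}=(q-1)m$, so $n=(q-1)(m-1)$, and I would index the $n$ coordinates by pairs $(i,s)$ with $1\le i\le m-1$ and $0\le s\le q-2$. Fix a primitive element $\omega$ of $\mathbb{F}_{q^{2}}$ and form the $(m-2)\times(m-1)$ matrix $A$ with $A_{r,i}=\omega^{(2a+1)i(m-3+(q-1)(r-1))}$ for $1\le r\le m-2$ and $1\le i\le m-1$. Using $q(q-1)\equiv-(q-1)\pmod{q^{2}-1}$ one checks that $q\bigl(m-3+(q-1)(r-1)\bigr)\equiv m-3+(q-1)(m-r-2)\pmod{q^{2}-1}$, so raising every entry to the $q$-th power merely reverses the rows; hence $A^{(q)}$ and $A$ are row equivalent and Lemma~\ref{lemNQ} yields a nonzero $\mathbf{c}\in\mathbb{F}_{q}^{m-1}$ with $A\mathbf{c}^{t}=0$. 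Deleting any single column of $A$ leaves a Vandermonde matrix, so every coordinate of $\mathbf{c}$ is nonzero and I may write $\mathbf{c}=(\omega^{a_{1}(q+1)},\dots,\omega^{a_{m-1}(q+1)})$.

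Next I would put $a_{(i,s)}=\omega^{(2a+1)i+(q+1)s}$ and $v_{(i,s)}=\omega^{a_{i}-(m-3)s}$. The coordinates of $\mathbf{a}$ are distinct because $(2a+1)(m-1)<(2a+1)m=q+1$ keeps each of the $q-1$ blocks inside one period of length $q+1$, so $\textup{GRS}_{k}(\mathbf{a},\mathbf{v})$ is a genuine generalized Reed--Solomon code, and plainly $\mathbf{v}\in(\mathbb{F}_{q^{2}}^{*})^{n}$. By Lemma~\ref{lemGRS} it suffices to show $\sum_{(i,s)}v_{(i,s)}^{q+1}a_{(i,s)}^{qj+l}=0$ for all $0\le j,l\le k-1\le(a+1)m-3$. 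Substituting the definitions and separating the summation over $s$ factors this quantity as
$$\Bigl(\sum_{i=1}^{m-1}\omega^{a_{i}(q+1)+(2a+1)i(qj+l)}\Bigr)\Bigl(\sum_{s=0}^{q-2}\omega^{(q+1)(qj+l-m+3)s}\Bigr).$$
Since $\omega^{q+1}$ has order $q-1$, the geometric $s$-sum equals $q-1$ when $(q-1)\mid(qj+l-m+3)$ and vanishes otherwise, so only exponents with $qj+l\equiv m-3\pmod{q-1}$ can contribute.

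The step I expect to be the main obstacle is the range analysis that pins down exactly which such exponents occur. As $qj+l\equiv j+l\pmod{q-1}$, the surviving condition reads $j+l\equiv m-3\pmod{q-1}$; and because $0\le j+l\le 2(a+1)m-6$, which is strictly less than $(m-3)+(q-1)$, the only solution is $j+l=m-3$. This is exactly where the hypothesis $k\le(a+1)m-2$ is needed: one more unit would make the next residue $(m-3)+(q-1)$ attainable, adding a further constraint and forcing $\mathbf{c}=0$. Hence $j$ ranges over $\{0,1,\dots,m-3\}$ with $l=m-3-j$, so $qj+l=(q-1)j+(m-3)$ and, writing $\zeta=\omega^{(2a+1)(q-1)}$ (a primitive $m$-th root of unity), the inner sum becomes $\sum_{i=1}^{m-1}\omega^{a_{i}(q+1)+(2a+1)i(m-3)}\zeta^{ij}$ for $j=0,\dots,m-3$. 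Since $\omega^{a_{i}(q+1)}=c_{i}$, these $m-2$ quantities are, after relabelling $r=j+1$, precisely the left-hand sides of the equations $A\mathbf{c}^{t}=0$, so they all vanish. Therefore $\sum_{(i,s)}v_{(i,s)}^{q+1}a_{(i,s)}^{qj+l}=0$ for every admissible $j,l$, and Lemma~\ref{lemGRS} gives $\textup{GRS}_{k}(\mathbf{a},\mathbf{v})\subseteq\textup{GRS}_{k}(\mathbf{a},\mathbf{v})^{\bot H}$. The only genuinely delicate point is the counting that forces exactly $m-2$ surviving exponents; the rest is the bookkeeping already carried out in Theorem~\ref{thmNQ3}.
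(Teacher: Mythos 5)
Your proof is correct, but it is not the route the paper itself takes for Theorem~\ref{thmNQ4}: you transplant the witness from Theorem~\ref{thmNQ3}, whereas the paper uses a different one. Both arguments share $\mathbf{a}$ and the same machinery (Lemma~\ref{lemNQ} plus Frobenius row-reversal to produce $\mathbf{c}$, the Vandermonde argument that every $c_i\neq 0$, the geometric sum over $s$, and Lemma~\ref{lemGRS}), but the paper twists by $v_{(i,s)}=\omega^{a_i+s}$, so its survival condition is $(q-1)\mid(qj+l+1)$ and its matrix rows correspond to exponents $r(q-1)-1$, $1\le r\le m-2$, while you twist by $v_{(i,s)}=\omega^{a_i-(m-3)s}$, getting the condition $(q-1)\mid(qj+l-m+3)$ and row exponents $m-3+(q-1)(r-1)$. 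Your version buys two things. First, your range analysis --- $qj+l\equiv j+l\pmod{q-1}$ together with $j+l\le 2(a+1)m-6<(m-3)+(q-1)$ forces $j+l=m-3$ exactly --- is cleaner than the paper's residue-class contradiction ($t\not\equiv 0,\,m-1\pmod m$) and shows precisely where the hypothesis $k\le(a+1)m-2$ enters. Second, you keep the factor $2a+1$ in the exponents of $A$, which the paper's printed proof drops: with $A_{ij}=\omega^{(i(q-1)-1)j}$ and $a\ge 1$, the Frobenius does not permute the rows (row $i$ would have to map to row $q-i \bmod (q+1)$, which exceeds $m-2$), and the surviving sums involve powers of $\omega^{(2a+1)(t(q-1)-1)}$ rather than $\omega^{t(q-1)-1}$, so they would not match the equations $A\mathbf{c}^t=0$; the intended matrix is evidently $A_{ij}=\omega^{(2a+1)(i(q-1)-1)j}$, and your write-up repairs this typo. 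One blemish you share with the paper: deleting an \emph{interior} column of $A$ leaves a generalized Vandermonde matrix, not a Vandermonde matrix; its nonsingularity does hold here because the nodes are a common scalar times distinct $m$-th roots of unity, for which the relevant elementary symmetric functions are nonzero, but a fully careful write-up should say so.
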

\begin{proof}
Let $\omega$ be a fixed primitive element of $\mathbb{F}_{q^{2}}$. We also let $A$ be an $(m-2)\times (m-1)$ matrix with $A_{ij}=\omega^{(i(q-1)-1)j}$ for $1\leq i\leq m-2,\ 1\leq j\leq m-1$.

Since $(i(q-1)-1)q\equiv((q-i)(q-1)-1)\pmod{q^{2}-1}$ for $1\leq i\leq q-1$, then $A^{(q)}$ and $A$ are row equivalent. By Lemma~\ref{lemNQ}, there exists $\mathbf{c}\in\mathbb{F}_{q}^{m-1}$ such that $A\cdot \mathbf{c}^{t}=0$. Since by deleting any one column of matrix $A$, the remaining matrix is a Vandermonde matrix, then all coordinates of $\mathbf{c}$ are nonzero. Hence we can represent $\mathbf{c}$ as $\mathbf{c}=(\omega^{a_{1}(q+1)},\cdots,\omega^{a_{m-1}(q+1)})$.

Now let $\mathbf{a}=(\omega^{2a+1},\omega^{2(2a+1)},\cdots,\omega^{q-2a},\omega^{q+2a+2},\cdots,\omega^{2q+1-2a},\cdots,\omega^{q^{2}-q-1+2a},\cdots,\omega^{q^{2}-2-2a})\in\mathbb{F}_{q^{2}}^{n}$ and $\mathbf{v}=(\omega^{a_{1}},\cdots,\omega^{a_{m-1}},\omega^{a_{1}+1},\cdots, \omega^{a_{m-1}+1}, \cdots,\omega^{a_{1}+q-2},\cdots,\omega^{a_{m-1}+q-2})\in(\mathbb{F}_{q^{2}}^{*})^{n}$.
Then for $0\leq j,l\leq k-1\leq (a+1)m-3$, we have
 \begin{align*}
&\sum_{i=1}^{n}v_{i}^{q+1}a_{i}^{qj+l} \\
&=\sum_{i=1}^{m-1}\omega^{a_{i}(q+1)+i(qj+l)}\sum_{s=0}^{q-2}\omega^{(q+1)(qj+l+1)s}.
\end{align*}
Note that
\[\sum_{s=0}^{q-2}\omega^{(q+1)(qj+l+1)s}=\begin{cases}0;&\textup{ if }(q-1)\nmid(qj+l+1),\\
q-1;&\textup{ if }(q-1)|(qj+l+1).\end{cases}\]

Assume $qj+l+1=t(q-1)$, then $t\not\equiv 0,m-1\pmod{m}$. Otherwise, if $t\equiv0\pmod{m}$, let $t=rm$. Then $qj+l=t(q-1)-1=(rm-1)q+(2a+1-r)m-2$ and $\textup{min}\{rm-1,(2a+1-r)m-2\}>(a+1)m-3$, which is a contradiction. Similarly, $t\not\equiv m-1\pmod{m}$. Thus
 \begin{align*}
&\sum_{i=1}^{n}v_{i}^{q+1}a_{i}^{qj+l} \\
&=(q-1)\sum_{i=1}^{m-1}\omega^{a_{i}(q+1)+i(qj+l)}\\
&=0,
\end{align*}
where the last equation is from the definition of $\mathbf{c}$. Then by Lemma~\ref{lemGRS}, $\textup{GRS}_{k}(\mathbf{a},\mathbf{v})\subseteq\textup{GRS}_{k}(\mathbf{a},\mathbf{v})^{\bot H}$.
\end{proof}

\begin{theorem}
Let $q$ be an odd prime power with the form $(2a+1)m-1$, then there exists a $q$-ary $[[\frac{q^{2}-1}{2a+1}-q+1,\frac{q^{2}-1}{2a+1}-q-2d+3,d]]$-quantum MDS code, where $2\leq d\leq (a+1)m-1$.
\end{theorem}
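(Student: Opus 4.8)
The plan is to derive this statement from the two ingredients already in hand, namely Theorem~\ref{thmNQ4} and Corollary~\ref{HerMDS}, mirroring the proof of Theorem~\ref{Mainthm} and of the corresponding result for length $\frac{q^{2}-1}{2a}-q+1$. First I would invoke Theorem~\ref{thmNQ4}: for $q=(2a+1)m-1$ and $n=\frac{q^{2}-1}{2a+1}-q+1$ it supplies vectors $\mathbf{a}\in\mathbb{F}_{q^{2}}^{n}$ and $\mathbf{v}\in(\mathbb{F}_{q^{2}}^{*})^{n}$ for which $\textup{GRS}_{k}(\mathbf{a},\mathbf{v})\subseteq\textup{GRS}_{k}(\mathbf{a},\mathbf{v})^{\bot H}$ for every $k$ with $1\leq k\leq (a+1)m-2$. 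Since every generalized Reed-Solomon code is MDS, each such $\textup{GRS}_{k}(\mathbf{a},\mathbf{v})$ is a $q^{2}$-ary Hermitian self-orthogonal $[n,k,n-k+1]$-MDS code.

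Next I would feed this code into Corollary~\ref{HerMDS}, which produces a $q$-ary $[[n,n-2k,k+1]]$ quantum MDS code from any $q^{2}$-ary Hermitian self-orthogonal $[n,k,n-k+1]$-MDS code. Running this for each admissible $k$ yields the whole family, and the remaining step is bookkeeping: setting $d=k+1$, the length is unchanged at $n=\frac{q^{2}-1}{2a+1}-q+1$, the quantum dimension becomes $n-2k=n-2(d-1)=\frac{q^{2}-1}{2a+1}-q-2d+3$, and the distance is $d$. As $k$ sweeps from $1$ to $(a+1)m-2$, the value $d=k+1$ sweeps from $2$ to $(a+1)m-1$, exactly the claimed range.

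No genuine obstacle survives at this level, because the substantive work is entirely internal to Theorem~\ref{thmNQ4}: the character-sum evaluation forcing $\sum_{i=1}^{n}v_{i}^{q+1}a_{i}^{qj+l}=0$ for all $0\leq j,l\leq k-1$, together with the congruence argument excluding $t\equiv 0,m-1\pmod m$ that confines $qj+l+1$ to the admissible residues. The only place I would be scrupulous is the MDS claim itself: the Hermitian construction of Theorem~\ref{thm1} only guarantees distance $\geq k+1$, but equality is forced by the Quantum Singleton Bound, since $2d\leq n-(n-2k)+2=2k+2$ gives $d\leq k+1$. This subtlety is already absorbed into the statement of Corollary~\ref{HerMDS}, so the final proof reduces to a direct citation of Corollary~\ref{HerMDS} and Theorem~\ref{thmNQ4}.
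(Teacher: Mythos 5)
Your proposal is correct and follows exactly the paper's route: the paper's proof is a one-line citation of Corollary~\ref{HerMDS} applied to the Hermitian self-orthogonal GRS codes of Theorem~\ref{thmNQ4}, and your parameter bookkeeping (setting $d=k+1$, so $k$ ranging over $1\leq k\leq (a+1)m-2$ gives $2\leq d\leq (a+1)m-1$ with dimension $n-2k=\frac{q^{2}-1}{2a+1}-q-2d+3$) matches the stated parameters. Your added remark that the MDS property follows from the quantum Singleton bound is a correct elaboration of what Corollary~\ref{HerMDS} already encapsulates.
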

\begin{proof}
The proof is a straightforward application of Corollary~\ref{HerMDS} and Theorem~\ref{thmNQ4}.
\end{proof}

In particular, taking $a=0$, we obtain the following corollary.
\begin{corollary}
Let $q$ be an odd prime power, then there exists a $q$-ary $[[q^{2}-q,q^{2}-q-2d+2,d]]$-quantum MDS code, where $2\leq d\leq q$.
\end{corollary}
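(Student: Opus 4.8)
The plan is to obtain this corollary directly as the $a=0$ specialization of the immediately preceding theorem, so that no new construction is needed; the whole argument is an arithmetic substitution together with a check that the boundary value $a=0$ is admissible. First I would verify that $a=0$ is legitimate in that theorem. Its hypothesis requires $q$ to have the form $(2a+1)m-1$, which for $a=0$ becomes $q=m-1$, i.e.\ $m=q+1$. Every odd prime power $q$ can be written in this way, and since $q\geq 3$ we have $m=q+1\geq 4$, so the matrix $A$ of size $(m-2)\times(m-1)$ appearing in Theorem~\ref{thmNQ4} is nonempty and of the required shape. Moreover, the potential trouble in the companion results --- where $2a$ appears in a denominator and would vanish at $a=0$ --- does not arise here, because the relevant denominator is $2a+1$, which equals $1$ when $a=0$.

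Next I would substitute $a=0$ and $m=q+1$ into the three parameters of the preceding theorem. The length becomes $\frac{q^{2}-1}{2a+1}-q+1=(q^{2}-1)-q+1=q^{2}-q$; the dimension becomes $\frac{q^{2}-1}{2a+1}-q-2d+3=(q^{2}-1)-q-2d+3=q^{2}-q-2d+2$; and the admissible range $2\leq d\leq (a+1)m-1$ collapses to $2\leq d\leq 1\cdot(q+1)-1=q$. These are exactly the claimed parameters $[[q^{2}-q,\,q^{2}-q-2d+2,\,d]]$ with $2\leq d\leq q$. The MDS property is inherited for free: writing $n=q^{2}-q$ and $k=q^{2}-q-2d+2$ one has $n-k+2=2d$, so each code meets the quantum Singleton bound with equality, which is automatic because the dimension formula coming from Corollary~\ref{HerMDS} always has the shape $n-2d+2$.

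There is essentially no obstacle, since the corollary is a pure substitution; the only point that genuinely deserves a second look is whether the construction inside Theorem~\ref{thmNQ4} survives at $a=0$. I would confirm that with $a=0$ the evaluation vector $\mathbf{a}$ and the multiplier vector $\mathbf{v}$ remain well defined (all their exponents, such as $2a+1$ and the successive shifts in $\mathbf{v}$, reduce to valid values), and that the divisibility bookkeeping in that proof still goes through: writing $qj+l+1=t(q-1)$ one finds $1\leq t\leq q-1$, so with $m=q+1$ the forbidden residue classes $t\equiv 0\pmod{m}$ and $t\equiv m-1\pmod{m}$ correspond to $t=0$ and $t=q$, both outside the range $1\leq t\leq q-1$, and hence the vanishing-sum argument is unaffected. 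Once this boundary case is checked, the corollary follows immediately.
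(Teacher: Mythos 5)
Your proposal is correct and takes essentially the same route as the paper, which obtains this corollary precisely by setting $a=0$ (hence $m=q+1$) in the immediately preceding theorem and reading off the parameters. Your extra verification that the construction of Theorem~\ref{thmNQ4} remains valid at the boundary value $a=0$ (the matrix $A$ is nonempty and the forbidden residues $t\equiv 0,\,m-1\pmod{m}$ never occur for $1\leq t\leq q-1$) is sound, though the paper leaves this check implicit.
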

\section{New Quantum Codes from Matrix-Product Codes}\label{QuanMP}
Let $A=(a_{ij})$ be an $s\times s$ matrix with entries in $\mathbb{F}_{q^{2}}$, we define $A^{(q)}=(a_{ij}^{q})$. Then we have the following result.
\begin{lemma}\label{MP2}
Let $A=(a_{ij})$ be a nonsingular $s\times s$ matrix such that $A^{(q)}$ is also nonsingular. Suppose there exist linear codes $C_{i}$ such that $C_{i}^{\bot H}\subseteq C_{i}$ for $i=1,2,\cdots,s$. Then
$$([C_{1},C_{2},\cdots,C_{s}]\cdot A)^{\bot H}\subseteq[C_{1},C_{2},\cdots,C_{s}]\cdot [(A^{(q)})^{-1}]^{t}.$$
\end{lemma}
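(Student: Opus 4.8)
The plan is to reduce everything to the Euclidean setting, where Theorem~\ref{MP1} already describes the dual of a matrix-product code, and then transport the result back to the Hermitian setting by means of the Frobenius map $x\mapsto x^{q}$. Write $\mathcal{C}=[C_{1},C_{2},\cdots,C_{s}]\cdot A$. Recall from Section~\ref{pre} that for any $q^{2}$-ary code $C$ one has $C^{\bot H}=(C^{q})^{\bot E}$; applying the Frobenius automorphism (which is additive, multiplicative, and an involution on $\mathbb{F}_{q^{2}}$) to the defining equations, one checks equally that $C^{\bot H}=(C^{\bot E})^{q}$, the two forms having the same dimension $n-\dim C$. Thus the first move is to write $\mathcal{C}^{\bot H}=(\mathcal{C}^{\bot E})^{q}$.

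Next I would compute $\mathcal{C}^{\bot E}$ using Theorem~\ref{MP1}: since $A$ is nonsingular, $\mathcal{C}^{\bot E}=[C_{1}^{\bot E},C_{2}^{\bot E},\cdots,C_{s}^{\bot E}]\cdot (A^{-1})^{t}$. The key auxiliary fact I need is that the Frobenius map commutes with the matrix-product construction, namely $([D_{1},\cdots,D_{s}]\cdot B)^{q}=[D_{1}^{q},\cdots,D_{s}^{q}]\cdot B^{(q)}$ for any component codes $D_{i}$ and any matrix $B$. This follows immediately from the block description of a matrix-product codeword $(\sum_{i}b_{i1}c_{i},\cdots,\sum_{i}b_{il}c_{i})$ with $c_{i}\in D_{i}$: raising every coordinate to the $q$-th power sends $\sum_{i}b_{ij}c_{i}$ to $\sum_{i}b_{ij}^{q}c_{i}^{q}$ because Frobenius is a field homomorphism, and $c_{i}^{q}$ ranges over $D_{i}^{q}$. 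Applying this with $B=(A^{-1})^{t}$ and $D_{i}=C_{i}^{\bot E}$, and using $(C_{i}^{\bot E})^{q}=C_{i}^{\bot H}$ together with $((A^{-1})^{t})^{(q)}=[(A^{(q)})^{-1}]^{t}$ (Frobenius commutes with transposition, and with inversion since $A^{(q)}$ is assumed nonsingular), I obtain the exact identity
$$\mathcal{C}^{\bot H}=[C_{1}^{\bot H},C_{2}^{\bot H},\cdots,C_{s}^{\bot H}]\cdot [(A^{(q)})^{-1}]^{t}.$$

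Finally I would invoke the hypothesis $C_{i}^{\bot H}\subseteq C_{i}$ for every $i$. The matrix-product construction is monotone in its component codes: if $D_{i}\subseteq E_{i}$ then $[D_{1},\cdots,D_{s}]\cdot B\subseteq[E_{1},\cdots,E_{s}]\cdot B$, since every codeword of the former is assembled from components $c_{i}\in D_{i}\subseteq E_{i}$ and is therefore also a codeword of the latter. Taking $D_{i}=C_{i}^{\bot H}$, $E_{i}=C_{i}$, and $B=[(A^{(q)})^{-1}]^{t}$ then yields
$$\mathcal{C}^{\bot H}=[C_{1}^{\bot H},\cdots,C_{s}^{\bot H}]\cdot [(A^{(q)})^{-1}]^{t}\subseteq[C_{1},\cdots,C_{s}]\cdot [(A^{(q)})^{-1}]^{t},$$
which is precisely the asserted inclusion.

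The main point requiring care, rather than genuine difficulty, is the bookkeeping of how the Frobenius map interacts with each ingredient: the matrix-product construction, the passage between the Euclidean and Hermitian duals, and matrix inversion and transposition. In particular, the nonsingularity of $A^{(q)}$ is exactly what makes $(A^{(q)})^{-1}$, and hence the right-hand code, well defined, so that hypothesis is consumed precisely at this step. Once these commutation facts are verified, the result drops out of Theorem~\ref{MP1} and the dual-containing hypothesis with no further computation.
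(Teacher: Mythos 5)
Your proof is correct and follows essentially the same route as the paper: both arguments reduce to the Euclidean setting via Theorem~\ref{MP1}, exchange Hermitian and Euclidean duals through the Frobenius map, arrive at the same exact identity $([C_{1},\cdots,C_{s}]\cdot A)^{\bot H}=[C_{1}^{\bot H},\cdots,C_{s}^{\bot H}]\cdot[(A^{(q)})^{-1}]^{t}$, and conclude by monotonicity of the matrix-product construction together with the hypothesis $C_{i}^{\bot H}\subseteq C_{i}$. The only immaterial difference is the order of operations: the paper first applies Frobenius and then Theorem~\ref{MP1} to $[C_{1}^{q},\cdots,C_{s}^{q}]\cdot A^{(q)}$ via $C^{\bot H}=(C^{q})^{\bot E}$, whereas you first apply Theorem~\ref{MP1} to $[C_{1},\cdots,C_{s}]\cdot A$ itself and then push Frobenius through via $C^{\bot H}=(C^{\bot E})^{q}$.
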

\begin{proof}
By Theorem~\ref{MP1}, we have
 \begin{align*}
([C_{1},C_{2},\cdots,C_{s}]\cdot A)^{\bot H}&=([C_{1}^{q},C_{2}^{q},\cdots,C_{s}^{q}]\cdot A^{(q)})^{\bot E} \\
&=[(C_{1}^{q})^{\bot E},(C_{2}^{q})^{\bot E},\cdots,(C_{s}^{q})^{\bot E}]\cdot [(A^{(q)})^{-1}]^{t}\\
&=[C_{1}^{\bot H},C_{2}^{\bot H},\cdots,C_{s}^{\bot H}]\cdot [(A^{(q)})^{-1}]^{t}\\
&\subseteq[C_{1},C_{2},\cdots,C_{s}]\cdot [(A^{(q)})^{-1}]^{t}.
\end{align*}
\end{proof}

Now we would like to use matrix-product codes to construct quantum codes.
\begin{corollary}\label{MP4}
Let $q=p^{t}$ be an odd prime power, where $p$ is a prime number. Suppose $C_{1},C_{2}$ are linear codes over $\mathbb{F}_{q^{2}}$ with parameters $[n,k_{i},d_{i}]$ and $C_{i}^{\bot H}\subseteq C_{i}$, $i=1,2$. Then there exists a Hermitian dual containing $[2n,k_{1}+k_{2},\geq\textup{min}\{2d_{1},d_{2}\}]$ code over $\mathbb{F}_{q^{2}}$.
\end{corollary}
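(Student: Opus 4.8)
The plan is to realize the desired code as a matrix-product code $[C_1,C_2]\cdot A$ for a cleverly chosen nonsingular $2\times 2$ matrix $A$ over $\mathbb{F}_{q^2}$, and then read off the three required properties (length and dimension, minimum distance, Hermitian dual-containment) from Theorem~\ref{MP3} and Lemma~\ref{MP2}. First I would fix
$$A=\begin{pmatrix} 1 & 1 \\ 1 & -1 \end{pmatrix},$$
and set $C:=[C_1,C_2]\cdot A$. Since $q$ is odd, $\det A=-2\neq 0$, so $A$ is nonsingular; because the determinant is a polynomial in the entries with coefficients in the prime field and the Frobenius $x\mapsto x^{q}$ is a field homomorphism, $\det(A^{(q)})=(\det A)^{q}\neq 0$, so $A^{(q)}$ is nonsingular as well. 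The length $2n$ and the dimension $k_1+k_2$ then follow directly from the definition of the matrix-product code and Theorem~\ref{MP3}.

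For the minimum distance I would invoke Theorem~\ref{MP3}, which gives $d(C)\geq \min\{d_1\delta_1,d_2\delta_2\}$, where $\delta_i$ is the minimum distance of the code on $\mathbb{F}_{q^2}^{2}$ generated by the first $i$ rows of $A$. The first row $(1,1)$ generates a $[2,1]$ code whose nonzero words all have weight $2$, so $\delta_1=2$; the full matrix $A$ is nonsingular, so its rows generate all of $\mathbb{F}_{q^2}^{2}$, giving $\delta_2=1$. Hence $d(C)\geq \min\{2d_1,d_2\}$, exactly as claimed.

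The substantive point is Hermitian dual-containment, and this is where the choice of $A$ matters. The hypotheses of Lemma~\ref{MP2}, namely $C_i^{\bot H}\subseteq C_i$ and the nonsingularity of $A$ and $A^{(q)}$, are precisely what we have, so the lemma yields
$$C^{\bot H}\subseteq [C_1,C_2]\cdot [(A^{(q)})^{-1}]^{t}.$$
It therefore suffices to show that the code on the right equals $C=[C_1,C_2]\cdot A$. For this I would use the elementary observation that multiplying $A$ on the left by an invertible diagonal matrix leaves the matrix-product code unchanged: scaling the $i$-th row of $A$ by a nonzero $\lambda_i$ only reparametrizes the contribution $c_i\mapsto \lambda_i c_i$, which is a bijection of the linear code $C_i$. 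A direct computation with the chosen $A$ gives $A^{(q)}=A$ (here $(-1)^{q}=-1$ because $q$ is odd) and $[(A^{(q)})^{-1}]^{t}=\tfrac{1}{2}A$, a nonzero scalar multiple of $A$. Consequently $[C_1,C_2]\cdot [(A^{(q)})^{-1}]^{t}=[C_1,C_2]\cdot A=C$, whence $C^{\bot H}\subseteq C$ and $C$ is Hermitian dual containing.

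I expect the only real obstacle to be locating a matrix $A$ with the self-reciprocal-type property that $[(A^{(q)})^{-1}]^{t}$ is row-proportional to $A$, while simultaneously keeping $\delta_1=2$ and $\delta_2=1$ so that the distance bound reads $\min\{2d_1,d_2\}$; once such an $A$ is in hand, every remaining step is a short verification. It is also worth isolating where oddness of $q$ enters: it is exactly what makes $\det A=-2$ invertible (so $A$, and hence $A^{(q)}$, is nonsingular and $\tfrac{1}{2}$ makes sense) and what gives $(-1)^{q}=-1$, both of which are used above.
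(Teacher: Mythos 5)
Your proposal is correct and follows essentially the same route as the paper: the paper takes the very same matrix (written as $\left(\begin{smallmatrix}1 & 1\\ 1 & p-1\end{smallmatrix}\right)$, which equals your $\left(\begin{smallmatrix}1 & 1\\ 1 & -1\end{smallmatrix}\right)$ in characteristic $p$), applies Lemma~\ref{MP2} to get $([C_1,C_2]\cdot A)^{\bot H}\subseteq [C_1,C_2]\cdot[(A^{(q)})^{-1}]^{t}$, identifies $[(A^{(q)})^{-1}]^{t}$ as a scalar multiple of $A$ generating the same matrix-product code, and reads off the parameters from Theorem~\ref{MP3}. Your write-up merely makes explicit some steps the paper leaves implicit (the scaling invariance and the values $\delta_1=2$, $\delta_2=1$).
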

\begin{proof}
Take
$$A=\left(
  \begin{array}{cc}
    1 & 1 \\
    1 & p-1 \\
  \end{array}
\right),$$
then
$$[(A^{(q)})^{-1}]^{t}=\left(
  \begin{array}{cc}
    \frac{p+1}{2} & \frac{p+1}{2} \\
    \frac{p+1}{2} & \frac{p-1}{2} \\
  \end{array}
\right).$$
By Lemma~\ref{MP2}, we have
 \begin{align*}
([C_{1},C_{2}]\cdot A)^{\bot H}&\subseteq[C_{1},C_{2}]\cdot [(A^{(q)})^{-1}]^{t} \\
&=[C_{1},C_{2}]\cdot A.
\end{align*}
Applying Theorem~\ref{MP3}, $[C_{1},C_{2}]\cdot A$ is a Hermitian dual containing  $[2n,k_{1}+k_{2},\geq\textup{min}\{2d_{1},d_{2}\}]$ code.
\end{proof}

The following result can be found in \cite{GBR04,JLLX10,JX14}.
\begin{theorem}\label{MP5}
Let $q$ be an odd prime power, then
\begin{enumerate}
  \item there exists a $q^{2}$-ary Hermitian dual containing $[q^{2}+1,q^{2}+2-d,d]$ code for $1\leq d\leq q+1$;
  \item there exists a $q^{2}$-ary Hermitian dual containing $[q^{2},q^{2}+1-d,d]$ code for $2\leq d\leq q$.
\end{enumerate}
\end{theorem}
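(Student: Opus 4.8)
The plan is to produce, for each admissible $d$, a Hermitian self-orthogonal MDS code $D$ over $\mathbb{F}_{q^{2}}$ of dimension $k'=d-1$ and the prescribed length $n$, and then take $C=D^{\bot H}$. Since $\dim D=d-1$ and $D\subseteq D^{\bot H}=C$, the code $C$ is automatically Hermitian dual containing, has dimension $n-(d-1)$, and -- being the Hermitian dual of a generalized Reed-Solomon code, hence again a generalized Reed-Solomon code -- is MDS with minimum distance $n-\dim C+1=d$. This yields exactly the parameters $[q^{2}+1,q^{2}+2-d,d]$ in (1) and $[q^{2},q^{2}+1-d,d]$ in (2). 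Throughout, self-orthogonality of $D$ is tested by the power-sum criterion of Lemma~\ref{lemGRS}, namely $\sum_{i}v_{i}^{q+1}a_{i}^{qj+l}=0$ for $0\le j,l\le k'-1$.

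For part (2), with $n=q^{2}$, I would let $\mathbf{a}$ run over all of $\mathbb{F}_{q^{2}}$ and take $\mathbf{v}=\mathbf{1}$, so the criterion reduces to $\sum_{a\in\mathbb{F}_{q^{2}}}a^{qj+l}=0$. Here one invokes the elementary fact that $\sum_{a\in\mathbb{F}_{q^{2}}}a^{e}$ vanishes unless $e=0$ or $(q^{2}-1)\mid e$ with $e>0$. For $0\le j,l\le d-2$ and $d\le q$ the exponent satisfies $0\le qj+l\le (q+1)(q-2)<q^{2}-1$, so the only multiple of $q^{2}-1$ that can occur is $0$ (at $j=l=0$), where the sum equals $q^{2}=0$ in characteristic $p$. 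Hence all sums vanish and $D=\textup{GRS}_{d-1}(\mathbf{a},\mathbf{1})$ is Hermitian self-orthogonal.

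Part (1) is the main difficulty, because $n=q^{2}+1$ exceeds the number of available evaluation points in $\mathbb{F}_{q^{2}}$, which forces an extended (projective) Reed-Solomon construction: evaluate polynomials of degree $<k'$ at all $q^{2}$ field elements and append one coordinate at infinity reading the coefficient of $x^{k'-1}$, with weights $v_{a}$ on the affine points and $v_{\infty}$ at infinity. The self-orthogonality condition then becomes $\sum_{a}v_{a}^{q+1}a^{i+qi'}+v_{\infty}^{q+1}[\,i=i'=k'-1\,]=0$ for $0\le i,i'\le k'-1$. At the top dimension $k'=q$ (that is, $d=q+1$) unit weights already suffice: the single nonzero affine sum is $\sum_{a}a^{q^{2}-1}=-1$, occurring only at $(i,i')=(q-1,q-1)$, and it is cancelled by taking $v_{\infty}=1$. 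For $k'<q$ the affine sum at the top pair is instead $0$, so the extra term can no longer be balanced by constant weights; in this regime I would imitate the mechanism of Theorems~\ref{thmNQ3} and \ref{thmNQ4}, building a Vandermonde-type matrix $A$ over $\mathbb{F}_{q^{2}}$ whose Frobenius image $A^{(q)}$ is row equivalent to $A$, and applying Lemma~\ref{lemNQ} to extract a nonzero $\mathbb{F}_{q}$-vector with nonzero entries supplying the normed weights $v_{a}^{q+1}$ that force every required sum to vanish.

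The hard part is thus the length-$(q^{2}+1)$ case for $k'<q$: one must arrange the auxiliary matrix so that the Frobenius row-equivalence hypothesis of Lemma~\ref{lemNQ} holds and so that the vector it produces simultaneously annihilates every off-diagonal power sum while matching the lone surviving contribution of the point at infinity. By comparison, the affine computation of part (2) and the constant-weight computation for $d=q+1$ are routine, and the final duality step preserves both the MDS property and the extended generalized Reed-Solomon form.
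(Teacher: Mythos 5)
First, a point of comparison: the paper offers no proof of Theorem~\ref{MP5} at all; it is stated as a known result and attributed to \cite{GBR04,JLLX10,JX14}, so any complete argument necessarily goes beyond what the paper does. Within your attempt, part (2) is complete and correct: taking $\mathbf{a}$ to enumerate $\mathbb{F}_{q^{2}}$ and $\mathbf{v}=\mathbf{1}$, every exponent $qj+l$ with $0\leq j,l\leq d-2\leq q-2$ satisfies $0\leq qj+l\leq q^{2}-q-2<q^{2}-1$, so all the power sums of Lemma~\ref{lemGRS} vanish (the exponent-$0$ sum being $q^{2}=0$ in characteristic $p$, using the paper's convention $0^{0}=1$), and passing to the Hermitian dual preserves the MDS property while giving dual-containment. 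Your treatment of the single value $d=q+1$ in part (1) (extended code, unit weights, $v_{\infty}=1$ cancelling the $-1$ coming from $\sum_{a}a^{q^{2}-1}$) is also correct.

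The genuine gap is part (1) for $2\leq d\leq q$, which is the bulk of that statement, and there you have only a plan, not a proof. You correctly isolate what is needed: weights $v_{a}$ such that $\sum_{a}v_{a}^{q+1}a^{i+qi'}=0$ for all pairs $(i,i')\neq(k'-1,k'-1)$ with $0\leq i,i'\leq k'-1$, while the diagonal sum $\sum_{a}v_{a}^{q+1}a^{(k'-1)(q+1)}$ is a \emph{nonzero} element of $\mathbb{F}_{q}$ (it can then be matched by $-v_{\infty}^{q+1}$, the norm map being onto $\mathbb{F}_{q}^{*}$). But ``imitate Theorems~\ref{thmNQ3} and~\ref{thmNQ4} via Lemma~\ref{lemNQ}'' does not discharge this: Lemma~\ref{lemNQ} produces a vector in the kernel of a matrix, i.e., it makes a prescribed family of sums vanish, whereas here you must make all sums vanish \emph{except one}, which must simultaneously be forced to be nonzero. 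Designing the evaluation set and the auxiliary matrix so that the kernel vector provably has this non-vanishing property is exactly the nontrivial content of the constructions in \cite{GBR04,JLLX10,JX14}, and nothing in your sketch guarantees it; indeed, your own analysis shows the natural constant-weight choice makes the diagonal sum equal to $0$ whenever $k'<q$, which is fatal, so some genuinely different choice is required. As written, the proposal establishes part (2) and one endpoint of part (1), but leaves the remaining, and hardest, cases open.
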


By combing Theorems~\ref{MP5},~\ref{thmNQ2} and  Corollary~\ref{MP4}, we can immediately get the following lemma.
\begin{lemma}\label{MP6}
Let $q$ be an odd prime power, then
\begin{enumerate}
  \item there exists a $q^{2}$-ary Hermitian dual containing $[2q^{2}+2,2q^{2}+4-d-\frac{d}{2},d]$ code, where $2\leq d\leq q+1$ is even;
  \item there exists a $q^{2}$-ary Hermitian dual containing $[2q^{2}+2,2q^{2}+3-d-\frac{d-1}{2},d]$ code, where $2\leq d\leq q+1$ is odd;
  \item there exists a $q^{2}$-ary Hermitian dual containing $[2q^{2},2q^{2}+2-d-\frac{d}{2},d]$ code, where $2\leq d\leq q$ is even;
  \item there exists a $q^{2}$-ary Hermitian dual containing $[2q^{2},2q^{2}+1-d-\frac{d-1}{2},d]$ code, where $2\leq d\leq q$ is odd;
  \item there exists a $q^{2}$-ary Hermitian dual containing $[2q^{2}-2,2q^{2}-d-\frac{d}{2},d]$ code, where $2\leq d\leq q$ is even;
  \item there exists a $q^{2}$-ary Hermitian dual containing $[2q^{2}-2,2q^{2}-1-d-\frac{d-1}{2},d]$ code, where $2\leq d\leq q$ is odd.
\end{enumerate}
\end{lemma}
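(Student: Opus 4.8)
The three target lengths $2q^{2}+2$, $2q^{2}$, and $2q^{2}-2$ are precisely $2n$ for $n=q^{2}+1$, $q^{2}$, and $q^{2}-1$, so the plan is to obtain each family by feeding two suitable length-$n$ Hermitian dual-containing codes into Corollary~\ref{MP4}. The length-$(q^{2}+1)$ and length-$(q^{2})$ ingredients are available off the shelf: parts (1) and (2) of Theorem~\ref{MP5} supply Hermitian dual-containing $[q^{2}+1,q^{2}+2-d,d]$ codes for $1\le d\le q+1$ and $[q^{2},q^{2}+1-d,d]$ codes for $2\le d\le q$. The only family not already in dual-containing form is the one of length $q^{2}-1$, which I would manufacture from Theorem~\ref{thmNQ2}.

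To build the length-$(q^{2}-1)$ ingredients I would specialize Theorem~\ref{thmNQ2} to $a=1$ (so $m=\frac{q-1}{2}$), producing Hermitian self-orthogonal codes $C=\textup{GRS}_{k}(\mathbf{a},\mathbf{v})$ of length $q^{2}-1$ for every $1\le k\le q-1$. Passing to the Hermitian dual reverses the inclusion, since $(C^{\bot H})^{\bot H}=C\subseteq C^{\bot H}$ shows $C^{\bot H}$ is Hermitian dual containing. Because $C^{\bot H}=(C^{q})^{\bot E}$ and $C^{q}$ is again a GRS code whose Euclidean dual is a GRS code, $C^{\bot H}$ is MDS with parameters $[q^{2}-1,q^{2}-1-k,k+1]$; writing $d=k+1$ yields Hermitian dual-containing $[q^{2}-1,q^{2}-d,d]$ codes for $2\le d\le q$. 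This is the third ingredient.

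With the three families in hand, each of the six cases is a single application of Corollary~\ref{MP4} after a parity-driven choice of component distances. For a target distance $d$ I would take $d_{2}=d$ and $d_{1}=\lceil d/2\rceil$, so that the distance bound $\min\{2d_{1},d_{2}\}$ coming from the matrix in the proof of Corollary~\ref{MP4} equals $d$; concretely $d_{1}=d/2$ when $d$ is even and $d_{1}=(d+1)/2$ when $d$ is odd. Summing the two component dimensions then reproduces the stated parameters: for length $2q^{2}+2$ one gets $(q^{2}+2-d_{1})+(q^{2}+2-d_{2})=2q^{2}+4-d_{1}-d_{2}$, which is $2q^{2}+4-d-\frac{d}{2}$ for $d$ even and $2q^{2}+3-d-\frac{d-1}{2}$ for $d$ odd, matching cases (1)–(2); running the identical bookkeeping with $q^{2}+2$ replaced by $q^{2}+1$ (Theorem~\ref{MP5}(2)) and by $q^{2}$ (the dualized family) gives cases (3)–(4) and (5)–(6). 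The admissible range of $d$ is forced by the component ranges: $d_{2}=d$ must lie in the relevant interval, and $d_{1}=\lceil d/2\rceil$ then automatically stays at or below the upper endpoint.

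The one genuine subtlety, and the step I would watch most carefully, is the boundary value $d=2$, where $d_{1}=\lceil d/2\rceil=1$. For length $q^{2}+1$ this is harmless because Theorem~\ref{MP5}(1) already permits $d=1$, but for lengths $q^{2}$ and $q^{2}-1$ the quoted ranges begin at $d=2$. There I would supply the missing $d_{1}=1$ component directly with the trivial full-length code $\mathbb{F}_{q^{2}}^{n}$, an $[n,n,1]$ code that is vacuously Hermitian dual containing since its Hermitian dual is $\{0\}$; this keeps $\min\{2d_{1},d_{2}\}=2=d$ and contributes dimension $n$, exactly matching the dimension formulas evaluated at $d=2$. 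Apart from this single bookkeeping point, every case follows immediately from Corollary~\ref{MP4}, so I expect no further obstacle.
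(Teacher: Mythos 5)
Your proposal is correct and follows essentially the same route as the paper, whose one-line proof is exactly this combination: Theorem~\ref{MP5} for the lengths $q^{2}+1$ and $q^{2}$, the $a=1$ case of Theorem~\ref{thmNQ2} (dualized to get Hermitian dual-containing $[q^{2}-1,q^{2}-d,d]$ MDS codes), and Corollary~\ref{MP4} with $d_{1}=\lceil d/2\rceil$, $d_{2}=d$. Your explicit handling of the $d=2$ boundary via the trivial $[n,n,1]$ code is a detail the paper leaves implicit, and it is the right way to close that small gap.
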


Then by the Hermitian construction and Lemma~\ref{MP6}, we have the following theorem.
\begin{theorem}\label{MP7}
Let $q$ be an odd prime power, then
\begin{enumerate}
  \item there exists a $q$-ary $[[2q^{2}+2,2q^{2}+6-3d,\geq d]]$ quantum code, where $2\leq d\leq q+1$ is even;
  \item there exists a $q$-ary $[[2q^{2}+2,2q^{2}+5-3d,\geq d]]$ quantum code, where $2\leq d\leq q+1$ is odd;
  \item there exists a $q$-ary $[[2q^{2},2q^{2}+4-3d,\geq d]]$ quantum code, where $2\leq d\leq q$ is even;
  \item there exists a $q$-ary $[[2q^{2},2q^{2}+3-3d,\geq d]]$ quantum code, where $2\leq d\leq q$ is odd;
  \item there exists a $q$-ary $[[2q^{2}-2,2q^{2}+2-3d,\geq d]]$ quantum code, where $2\leq d\leq q$ is even;
  \item there exists a $q$-ary $[[2q^{2}-2,2q^{2}+1-3d,\geq d]]$ quantum code, where $2\leq d\leq q$ is odd.
\end{enumerate}
\end{theorem}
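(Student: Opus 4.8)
The plan is to obtain each of the six families directly from the matching family of Hermitian dual-containing codes produced in Lemma~\ref{MP6}, feeding each one into the Hermitian construction of Theorem~\ref{thm1}. That construction sends a $q^2$-ary Hermitian dual-containing $[n,k,d]$ code to a $q$-ary $[[n,2k-n,\geq d]]$ quantum code, so both the length $n$ and the minimum-distance bound $\geq d$ transfer verbatim. Consequently the only thing to verify is that $2k-n$ equals the dimension claimed in each of the six items, which is a one-line computation per case once the relevant $n$ and $k$ are read off from Lemma~\ref{MP6}.

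First I would dispatch the three even-distance families, items (1), (3), (5) of Lemma~\ref{MP6}, where $k = N - d - \frac{d}{2}$ with $N$ equal to $2q^2+4$, $2q^2+2$, $2q^2$ and $n$ equal to $2q^2+2$, $2q^2$, $2q^2-2$ respectively. Then $2k = 2N - 3d$, so $2k - n = (2N - n) - 3d$, and substituting the values of $N$ and $n$ gives $2N-n = 2q^2+6,\ 2q^2+4,\ 2q^2+2$, reproducing exactly the three claimed dimensions $2q^2+6-3d$, $2q^2+4-3d$, $2q^2+2-3d$. The distance bound $\geq d$ is inherited unchanged.

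Next I would treat the three odd-distance families, items (2), (4), (6), where instead $k = N' - d - \frac{d-1}{2}$ with $N'$ equal to $2q^2+3$, $2q^2+1$, $2q^2-1$. Here $2k = 2N' - 3d + 1$, so $2k - n = (2N' - n + 1) - 3d$; the extra $+1$ coming from the odd parity is precisely what shifts the constant term, and one checks $2N'-n+1 = 2q^2+5,\ 2q^2+3,\ 2q^2+1$, yielding the claimed dimensions $2q^2+5-3d$, $2q^2+3-3d$, $2q^2+1-3d$.

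There is no genuine obstacle beyond bookkeeping: all the real content sits in Lemma~\ref{MP6}, whose proof already fused the matrix-product machinery of Corollary~\ref{MP4} with the input codes from Theorem~\ref{MP5}. The single point that demands attention is keeping the even/odd parity split straight, since it is exactly the difference between $\frac{d}{2}$ and $\frac{d-1}{2}$ that separates the six cases and produces the paired constant terms in the final dimensions; a careless merge of the two parities is the only place an error could creep in.
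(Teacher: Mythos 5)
Your proposal is correct and matches the paper's own proof, which likewise obtains each of the six families by applying the Hermitian construction (Theorem~\ref{thm1}) to the corresponding Hermitian dual-containing code of Lemma~\ref{MP6}; your dimension computations $2k-n$ check out in all six cases.
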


In Table~\ref{compa}, we list some quantum codes obtained from Theorem~\ref{MP7}. The table shows that our quantum codes have better parameters than the previous quantum codes available.
\begin{table}[h]
\begin{center}
\caption{Quantum Codes Comparison}
\begin{tabular}{|c|c|}
\hline
new quantum codes  &  quantum codes from \cite{E12}\\ \hline
$[[20,14,\geq3]]_{3}$ & $[[20,12,3]]_{3}$ \\ \hline
$[[48,28,\geq8]]_{5}$ & $[[48,26,8]]_{5}$ \\ \hline
$[[52,44,\geq4]]_{5}$ & $[[52,42,4]]_{5}$ \\ \hline
$[[52,40,\geq5]]_{5}$ & $[[52,38,5]]_{5}$ \\ \hline
$[[96,64,\geq12]]_{7}$ & $[[96,62,12]]_{7}$ \\ \hline
$[[100,92,\geq4]]_{7}$ & $[[100,92,3]]_{7}$ \\ \hline
$[[164,152,\geq5]]_{9}$ & $[[164,150,5]]_{9}$ \\ \hline
$[[164,156,\geq4]]_{9}$ & $[[164,154,4]]_{9}$ \\ \hline
\end{tabular}
\label{compa}
\end{center}
\end{table}

\end{spacing}

\begin{thebibliography}{10}
\providecommand{\url}[1]{#1}
\csname url@samestyle\endcsname
\providecommand{\newblock}{\relax}
\providecommand{\bibinfo}[2]{#2}
\providecommand{\BIBentrySTDinterwordspacing}{\spaceskip=0pt\relax}
\providecommand{\BIBentryALTinterwordstretchfactor}{4}
\providecommand{\BIBentryALTinterwordspacing}{\spaceskip=\fontdimen2\font plus
\BIBentryALTinterwordstretchfactor\fontdimen3\font minus
  \fontdimen4\font\relax}
\providecommand{\BIBforeignlanguage}[2]{{%
\expandafter\ifx\csname l@#1\endcsname\relax
\typeout{** WARNING: IEEEtranS.bst: No hyphenation pattern has been}%
\typeout{** loaded for the language `#1'. Using the pattern for}%
\typeout{** the default language instead.}%
\else
\language=\csname l@#1\endcsname
\fi
#2}}
\providecommand{\BIBdecl}{\relax}
\BIBdecl

\bibitem{AKS07}
S.~A. Aly, A.~Klappenecker, and P.~K. Sarvepalli, ``On quantum and classical
  {BCH} codes,'' \emph{IEEE Trans. Inform. Theory}, vol.~53, no.~3, pp.
  1183--1188, 2007.

\bibitem{AK01}
A.~Ashikhmin and E.~Knill, ``Nonbinary quantum stabilizer codes,'' \emph{IEEE
  Trans. Inform. Theory}, vol.~47, no.~7, pp. 3065--3072, 2001.

\bibitem{BE00}
J.~Bierbrauer and Y.~Edel, ``Quantum twisted codes,'' \emph{J. Combin. Des.},
  vol.~8, no.~3, pp. 174--188, 2000.

\bibitem{BN01}
T.~Blackmore and G.~H. Norton, ``Matrix-product codes over {$\Bbb F_q$},''
  \emph{Appl. Algebra Engrg. Comm. Comput.}, vol.~12, no.~6, pp. 477--500,
  2001.

\bibitem{CRSS97}
A.~R. Calderbank, E.~M. Rains, P.~W. Shor, and N.~J.~A. Sloane, ``Quantum error
  correction and orthogonal geometry,'' \emph{Phys. Rev. Lett.}, vol.~78,
  no.~3, pp. 405--408, 1997.

\bibitem{CRSS98}
------, ``Quantum error correction via codes over {${\rm GF}(4)$},'' \emph{IEEE
  Trans. Inform. Theory}, vol.~44, no.~4, pp. 1369--1387, 1998.

\bibitem{CLZ15}
B.~Chen, S.~Ling, and G.~Zhang, ``Application of constacyclic codes to quantum
  {MDS} codes,'' \emph{IEEE Trans. Inform. Theory}, vol.~61, no.~3, pp.
  1474--1484, 2015.

\bibitem{CLX05}
H.~Chen, S.~Ling, and C.~Xing, ``Quantum codes from concatenated
  algebraic-geometric codes,'' \emph{IEEE Trans. Inform. Theory}, vol.~51,
  no.~8, pp. 2915--2920, 2005.

\bibitem{E12}
Y.~Edel, ``{Some good quantum twisted codes},'' Online available at
  \url{https://www.mathi.uni-heidelberg.de/~yves/Matritzen/QTBCH/QTBCHIndex.html}.

\bibitem{GHR14}
C.~Galindo, F.~Hernando, and D.~Ruano, ``New quantum codes from evaluation and
  matrix-product codes,'' arXiv:1406.0650.

\bibitem{GBR04}
M.~Grassl, T.~Beth, and M.~Roetteler, ``On optimal quantum codes,'' \emph{Int.
  J. Quantum Inf.}, vol.~2, no.~01, pp. 55--64, 2004.

\bibitem{GRB04}
M.~Grassl, M.~Rotteler, and T.~Beth, ``On quantum {MDS} codes,'' in \emph{Proc.
  Int. Symp. Inf. Theory, Chicago, June}, 2004, p. 355.

\bibitem{HLR09}
F.~Hernando, K.~Lally, and D.~Ruano, ``Construction and decoding of
  matrix-product codes from nested codes,'' \emph{Appl. Algebra Engrg. Comm.
  Comput.}, vol.~20, no. 5-6, pp. 497--507, 2009.

\bibitem{JLLX10}
L.~Jin, S.~Ling, J.~Luo, and C.~Xing, ``Application of classical {H}ermitian
  self-orthogonal {MDS} codes to quantum {MDS} codes,'' \emph{IEEE Trans.
  Inform. Theory}, vol.~56, no.~9, pp. 4735--4740, 2010.

\bibitem{JX14}
L.~Jin and C.~Xing, ``A construction of new quantum {MDS} codes,'' \emph{IEEE
  Trans. Inform. Theory}, vol.~60, no.~5, pp. 2921--2925, 2014.

\bibitem{KZL14}
X.~Kai, S.~Zhu, and P.~Li, ``Constacyclic codes and some new quantum {MDS}
  codes,'' \emph{IEEE Trans. Inform. Theory}, vol.~60, no.~4, pp. 2080--2086,
  2014.

\bibitem{KKK06}
A.~Ketkar, A.~Klappenecker, S.~Kumar, and P.~K. Sarvepalli, ``Nonbinary
  stabilizer codes over finite fields,'' \emph{IEEE Trans. Inform. Theory},
  vol.~52, no.~11, pp. 4892--4914, 2006.

\bibitem{KL97}
E.~Knill and R.~Laflamme, ``Theory of quantum error-correcting codes,''
  \emph{Phys. Rev. A (3)}, vol.~55, no.~2, pp. 900--911, 1997.

\bibitem{L11}
G.~G. La~Guardia, ``New quantum {MDS} codes,'' \emph{IEEE Trans. Inform.
  Theory}, vol.~57, no.~8, pp. 5551--5554, 2011.

\bibitem{LMPZ96}
R.~Laflamme, C.~Miquel, J.~P. Paz, and W.~H. Zurek, ``Perfect quantum error
  correcting code,'' \emph{Phys. Rev. Lett.}, vol.~77, no.~1, p. 198, 1996.

\bibitem{LXW08}
Z.~Li, L.~Xing, and X.~Wang, ``Quantum generalized {R}eed-{S}olomon codes:
  unified framework for quantum maximum-distance-separable codes,'' \emph{Phys.
  Rev. A (3)}, vol.~77, no.~1, pp. 012\,308, 4, 2008.

\bibitem{R99}
E.~M. Rains, ``Nonbinary quantum codes,'' \emph{IEEE Trans. Inform. Theory},
  vol.~45, no.~6, pp. 1827--1832, 1999.

\bibitem{S95}
P.~W. Shor, ``Scheme for reducing decoherence in quantum computer memory,''
  \emph{Phys. Rev. A}, vol.~52, no.~4, p. R2493, 1995.

\bibitem{S96}
A.~Steane, ``Multiple-particle interference and quantum error correction,''
  \emph{Proc. Roy. Soc. London Ser. A}, vol. 452, no. 1954, pp. 2551--2577,
  1996.

\bibitem{S99}
A.~M. Steane, ``Enlargement of {C}alderbank-{S}hor-{S}teane quantum codes,''
  \emph{IEEE Trans. Inform. Theory}, vol.~45, no.~7, pp. 2492--2495, 1999.

\end{thebibliography}
\end{document}